\title{An algorithmic weakening of the Erd\H{o}s-Hajnal conjecture}
\titlerunning{An algorithmic weakening of the Erd\H{o}s-Hajnal conjecture} 
\author{\'{E}douard Bonnet}{Univ Lyon, CNRS, ENS de Lyon, Universit\'{e} Claude Bernard Lyon 1, LIP UMR5668, France}{}{}{}
\author{St\'{e}phan Thomass\'{e}}{Univ Lyon, CNRS, ENS de Lyon, Universit\'{e} Claude Bernard Lyon 1, LIP UMR5668, France}{}{}{}
\author{Xuan Thang Tran}{Univ Lyon, CNRS, ENS de Lyon, Universit\'{e} Claude Bernard Lyon 1, LIP UMR5668, France}{}{}{}
\author{R\'{e}mi Watrigant}{Univ Lyon, CNRS, ENS de Lyon, Universit\'{e} Claude Bernard Lyon 1, LIP UMR5668, France}{}{}{}
\authorrunning{\'E. Bonnet, S. Thomass\'{e}, X. T. Tran, R. Watrigant}
\keywords{Approximation, Maximum Independent Set, H-free Graphs, Erd\H{o}s-Hajnal conjecture}
\newcommand{\mis}{\textsc{MIS}\xspace}
\newcommand{\smis}{\textsc{MIS}\xspace}
\newcommand{\nice}{locally easy\xspace}
\newcommand{\tnice}{$t$-locally easy\xspace}
\newcommand{\pnice}[1]{${#1}$-locally easy\xspace}
\theoremstyle{plain}
\newtheorem{conjecture}[theorem]{Conjecture}
\newtheorem{observation}[theorem]{Observation}
\newcommand{\whp}{\emph{w.h.p.}\xspace}
\begin{document}

\maketitle

\begin{abstract}
We study the approximability of the \textsc{Maximum Independent Set} (\textsc{MIS}) problem in $H$-free graphs (that is, graphs which do not admit $H$ as an induced subgraph). As one motivation we investigate the following conjecture: for every fixed graph $H$, there exists a constant $\delta > 0$ such that \textsc{MIS} can be $n^{1 - \delta}$-approximated in $H$-free graphs, where $n$ denotes the number of vertices of the input graph.
We first prove that a constructive version of the celebrated Erd\H{o}s-Hajnal conjecture implies ours. 
We then prove that the set of graphs $H$ satisfying our conjecture is closed under the so-called graph substitution. This, together with the known polynomial-time algorithms for \textsc{MIS} in $H$-free graphs (e.g. $P_6$-free and fork-free graphs), implies that our conjecture holds for many graphs $H$ for which the Erd\H{o}s-Hajnal conjecture is still open.
We then focus on improving the constant $\delta$ for some graph classes: we prove that the classical \textsc{Local Search} algorithm provides an $OPT^{1-\frac{1}{t}}$-approximation in $K_{t, t}$-free graphs (hence a $\sqrt{OPT}$-approximation in $C_4$-free graphs), and, while there is a simple $\sqrt{n}$-approximation in triangle-free graphs, it cannot be improved to $n^{\frac{1}{4}-\varepsilon}$ for any $\varepsilon > 0$ unless $NP \subseteq BPP$. More generally, we show that there is a constant $c$ such that \textsc{MIS} in graphs of girth~$\gamma$ cannot be $n^{\frac{c}{\gamma}}$-approximated. Up to a constant factor in the exponent, this matches the ratio of a known approximation algorithm by Monien and Speckenmeyer, and by Murphy. To the best of our knowledge, this is the first strong (i.e., $\Omega(n^\delta)$ for some $\delta > 0$) inapproximability result for \textsc{Maximum Independent Set} in a proper hereditary class.
\end{abstract}

\section{Introduction}\label{sec:intro}

An \emph{independent set} of a (simple, undirected) graph is a set of pairwise non-adjacent vertices.
Independent sets have been central in various research topics, both in algorithmic and structural graph theory. 
In structural graph theory, independent sets (and their complements, cliques) are at the core of several celebrated results, such as K\H{o}nig's theorem, Ramsey's theorem, or Turan's theorem \cite{BondyMurty08}, to name only a few.
Finding an independent set of maximum cardinality (called the \textsc{Maximum Independent Set} problem, or \smis for short) is a fundamental intractable optimization problem.
Indeed, it is NP-hard to solve \cite{GJ79}, but also to approximate within ratio $n^{1-\varepsilon}$ for any $\varepsilon > 0$ \cite{Hastad96,Zuckerman07}, where $n$ denotes the number of vertices of the input graph.
On the positive side, \smis becomes tractable when restricted to some specific graph classes: It is polynomial-time solvable in bipartite graphs and more generally in perfect graphs \cite{Grotschel88}, admits a PTAS in planar graphs~\cite{Baker94} and in more general geometric graph classes such as pseudo-disk graphs~\cite{Chan03}, bounded genus or $H$-minor-free graphs~\cite{Demaine05}. 
Notice that all the aforementioned graph classes are closed under taking induced subgraphs.
We call \emph{hereditary} such a class, and add the qualificative \emph{proper} if it is not the class of all graphs.
A hereditary class can be defined by a (possibly infinite) set of forbidden induced subgraphs.
A potentially unifying framework is to consider the complexity of \mis in $H$-free graphs (i.e., graphs without induced copy of $H$) and $\mathcal H$-free graphs (i.e., graphs without induced copy of any $H \in \mathcal H$).
However, a classical reduction \cite{Poljak74,Alekseev82} consisting of subdividing every edge of a given graph $G$ a fixed even number of times $2c$ leads to a graph $G'$ such that $\alpha(G') = \alpha(G)+c |E(G)|$ (where $\alpha(.)$ denotes the size of a maximum independent set of a graph). This reduction, together with the fact that \smis remains APX-hard in graphs of maximum degree at most $3$ \cite{AlKa00} (which means in particular that we may assume that $\alpha(G) = \Omega(|E(G)|)$ in the reduction) implies the following:
\begin{theorem}[\cite{Poljak74,Alekseev82} and \cite{AlKa00}]
For any fixed connected graph $H$ which is neither a path nor a subdivision of the claw $K_{1,3}$, \mis is APX-hard in $H$-free graphs.
\end{theorem}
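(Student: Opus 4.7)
The plan is to instantiate the Poljak--Alekseev subdivision reduction sketched in the paragraph preceding the theorem. Starting from an instance $G$ of maximum degree at most $3$ (on which \mis is APX-hard by \cite{AlKa00}), I form $G'$ by subdividing each edge of $G$ exactly $2c$ times, where $c = c(H)$ is a constant to be chosen. The classical identity $\alpha(G') = \alpha(G) + c \lvert E(G) \rvert$ then holds. It remains to verify that (i) $G'$ is $H$-free for $c$ large enough, and (ii) the reduction is approximation-preserving.

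For (i), I first observe a structural dichotomy on $H$: since $H$ is connected, is not a path, and is not a subdivision of $K_{1,3}$, at least one of the following holds: (a) $H$ contains a cycle; (b) $H$ is a tree with a vertex of degree at least $4$; (c) $H$ is a tree with at least two vertices of degree at least $3$. Indeed, otherwise $H$ would be a tree with at most one branching vertex, that vertex having degree exactly $3$, which would make $H$ a subdivided claw; or a tree of maximum degree at most $2$, which would make $H$ a path. Setting $c := \lvert V(H) \rvert$, I verify $H$-freeness in each case. For~(a), every cycle of $G'$ has length at least $3(2c+1) > \lvert V(H) \rvert$, hence $G'$ contains no induced copy of $H$ (which would contain a cycle of length at most $\lvert V(H) \rvert$). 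For~(b), $G'$ inherits maximum degree at most $3$ from $G$ (subdivision vertices have degree $2$), so no induced subgraph can feature a vertex of degree at least $4$. For~(c), any vertex of $G'$ of degree at least $3$ is an original vertex of $G$, and two such vertices are at distance at least $2c+1$ in $G'$; since distances in an induced subgraph upper-bound distances in the host, two branching vertices of an induced copy of $H$ would have to be at distance at most $\mathrm{diam}(H) \leq \lvert V(H) \rvert - 1 < 2c+1$, a contradiction.

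For (ii), the APX-hard instances of \cite{AlKa00} may be assumed to satisfy $\alpha(G) = \Omega(\lvert E(G) \rvert)$, so $\alpha(G') = \alpha(G) + c\lvert E(G) \rvert$ remains linear in $\alpha(G)$ with an $H$-dependent constant. A standard back-translation extracts, from any independent set $I'$ of $G'$, an independent set of $G$ of size at least $\lvert I' \rvert - c\lvert E(G) \rvert$, so a $(1+\varepsilon)$-approximation in $G'$ becomes a $(1+O_H(\varepsilon))$-approximation in $G$, preserving APX-hardness. The main (and only mildly delicate) obstacle is the structural argument in case~(c): one must pin down that the branching vertices of $H$ necessarily embed into original vertices of $G$, and then convert this into the distance bound above; the cycle case is immediate since subdivision only lengthens cycles, and the high-degree case is automatic from $\Delta(G') \leq 3$.
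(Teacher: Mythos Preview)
Your proposal is correct and faithfully expands the sketch the paper gives in the paragraph immediately preceding the theorem; the paper itself does not supply a proof beyond that sketch (the result is attributed to \cite{Poljak74,Alekseev82,AlKa00}), so there is nothing further to compare. Your case analysis (cycle / degree-$\geq 4$ vertex / two branch vertices) is the standard way to certify $H$-freeness after the $2c$-subdivision, and your back-translation together with $\alpha(G)=\Omega(|E(G)|)$ on subcubic instances is exactly the APX-preservation argument the paper alludes to.
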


On the positive side, polynomial algorithms are known for $P_6$-free graphs~\cite{GrzesikKPP19} and fork-free graphs~\cite{Alekseev04}.
For paths on at least seven vertices and subdivided claws not contained in the fork, the computational complexity of \smis remains unsettled.

In this work, we start a systematic investigation of the approximability of \smis in $H$-free graphs.
The intuition is that forbidding a fixed graph $H$ as an induced subgraph should imply a drastic change in the structure of independent sets and cliques. 
This idea is at the core of the Erd\H{o}s-Hajnal conjecture: while in random graphs of $\mathcal G(n,1/2)$ the expected maximum of the clique number and the independence number is $O(\log n)$~\cite{erdos47}, this value should be significantly larger for an $H$-free graph.
More formally:

\begin{definition}
A graph $H$ satisfies the \emph{Erd\H{o}s-Hajnal property} if there exists a constant $\delta > 0$ such that every $H$-free graph $G$ with $n$ vertices contains either a clique or an independent set of size $n^{\delta}$.
\end{definition}

\begin{conjecture}[\cite{ErdosHajnal89}]\label{conj:erdoshajnal}
Every graph $H$ satisfies the Erd\H{o}s-Hajnal property.
\end{conjecture}

So far, the Erd\H{o}s-Hajnal conjecture has been verified for only a small number of graphs, namely: all graphs on at most four vertices, the bull\footnote{The bull is the graph obtained by adding a pending vertex to two different vertices of a triangle.}, the cliques, and every graph that can be constructed from them using the so-called \emph{substitution} operation~\cite{Chud14} (we describe this operation in Section~\ref{sec:erdoshajnal}).
Interestingly, for many graphs $H$ satisfying the Erd\H{o}s-Hajnal property, \smis is known to be either polynomial
or at least to admit an $n^{1-\varepsilon}$-approximation algorithm for some $\varepsilon > 0$.
A typical example of this situation is when $H$ is the clique of size $t > 1$.
In that case, Ramsey's theorem can be invoked to get a $n^{\frac{t-2}{t-1}}$-approximation algorithm.
Indeed a $K_t$-free graph always contains an independent set of size at least $n^{\frac{1}{t-1}}$, and the classical proof readily yields a polytime algorithm finding such an independent set.
This leads us to define an approximation weaker version of the Erd\H{o}s-Hajnal property and its companion conjecture:

\begin{definition}
A graph $H$ satisfies the \emph{improved approximation property} if there exists a constant $\varepsilon > 0$ such that \smis admits a (randomized) $n^{1-\varepsilon}$-approximation polynomial algorithm on every $H$-free $n$-vertex graph $G$.
\end{definition}

Here, a \textit{randomized} $\rho$-approximation algorithm is an algorithm which, given an input graph on $n$ vertices, outputs a $\rho$-approximation of the problem \textit{with high probability} (\whp for short), that is with probability at least a function of $n$ tending to $1$ when $n$ goes to infinity.

\begin{conjecture}\label{conj:main}
  Every graph $H$ satisfies the improved approximation property.
\end{conjecture}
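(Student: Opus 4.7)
The plan is to attack Conjecture~\ref{conj:main} by a three-layered strategy: establishing base cases, proving closure under a natural graph operation, and reducing a constructive version of Erd\H{o}s-Hajnal to the weaker conjecture.

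First, I would collect several families of graphs $H$ that directly satisfy the improved approximation property. Graphs $H$ for which \smis is polynomial-time solvable on $H$-free graphs --- such as $P_6$~\cite{GrzesikKPP19} and the fork~\cite{Alekseev04} --- trivially qualify (take $\varepsilon$ arbitrarily close to $1$). Cliques $K_t$ form a second family: the Ramsey bound $R(t,s)=O(s^{t-1})$ yields, in polynomial time, an independent set of size $\Omega(n^{1/(t-1)})$ in any $K_t$-free graph, hence a ratio of roughly $n^{(t-2)/(t-1)}$.

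Second, I would prove the key reduction from a constructive form of Conjecture~\ref{conj:erdoshajnal} --- where one can in polynomial time exhibit either a clique or an independent set of size $n^{\delta}$ in an $H$-free graph --- to Conjecture~\ref{conj:main}. The algorithm iterates the constructive Erd\H{o}s-Hajnal oracle: if it returns an independent set, keep it as the candidate output; if it returns a clique $K$, delete $K$ and recurse on the still $H$-free graph $G\setminus K$. The analysis rests on the fact that removing a clique decreases $\alpha(G)$ by at most one. Stopping the recursion when the residual graph has size $n^{1/2}$ (say), the number of clique deletions is bounded by $O(n^{1-\delta/2})$, so either we discover during the process an independent set of size $\Omega(n^{\delta/2})$, or the final bound $\alpha(G) = O(n^{1-\delta/2})$ makes a single vertex already a suitable approximation; in both cases we obtain an $n^{1-\delta/2}$-approximation.

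Third, I would prove that the class of graphs satisfying the improved approximation property is closed under the substitution operation (as announced in the abstract and paralleling the situation for Erd\H{o}s-Hajnal itself). Combined with the base cases, this already produces an explicit and large family of graphs $H$ for which Conjecture~\ref{conj:main} provably holds, strictly containing the family for which Erd\H{o}s-Hajnal is currently verified.

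The main obstacle lies outside the scope of this outline: covering \emph{all} graphs $H$ would require establishing the constructive form of Erd\H{o}s-Hajnal universally, which is strictly stronger than the (already open) Erd\H{o}s-Hajnal conjecture. For graphs $H$ where neither Erd\H{o}s-Hajnal is settled nor a polynomial-time algorithm for \smis on $H$-free graphs is known, Conjecture~\ref{conj:main} genuinely seems to demand new structural or algorithmic insight, and no purely general reduction within this plan is likely to close the gap.
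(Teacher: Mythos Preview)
The statement is an open conjecture, not a theorem proved in the paper, and you correctly acknowledge this in your final paragraph. Your three-step plan---polynomial base cases plus Ramsey for cliques, the reduction from constructive Erd\H{o}s-Hajnal via iterated clique removal, and closure under substitution---is exactly the paper's strategy (Theorems~\ref{thm:erdoshajnal-implies-approx} and~\ref{thm:substitution}); the only quantitative difference is that your stopping threshold $n^{1/2}$ in step two yields $\varepsilon=\delta/2$, whereas the paper stops at $n^{1-\delta}$ and obtains the slightly sharper $\varepsilon=\delta-\delta^2$.
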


We refer to Conjecture~\ref{conj:main} as the \textit{improved approximation conjecture}.
Informally, it states that the inapproximability of \smis in general graphs can be beaten in any proper hereditary class.

\paragraph*{Results and organization of the paper.}

On the one hand, there exist graphs $H$ satisfying the improved approximation property for which the Erd\H{o}s-Hajnal conjecture is still open.
Indeed, as mentioned previously, \smis is polynomial-time solvable in $P_6$-free graphs, whereas it is still open whether $P_5$ satisfies the Erd\H{o}s-Hajnal property.
On the other hand, one may wonder if the satisfiability of the Erd\H{o}s-Hajnal property for a graph $H$ can help designing an approximation algorithm in $H$-free graphs, and more concretely if Conjecture~\ref{conj:erdoshajnal} implies Conjecture~\ref{conj:main}.
In~\cref{sec:erdoshajnal}, we prove that this is almost the case. More precisely, we prove that every graph $H$ satisfying a \emph{constructive} version of the Erd\H{o}s-Hajnal property also satisfies the improved approximation property.
We also show that the improved approximation property is preserved through the substitution operation, which is the one graph operation known to preserve the Erd\H{o}s-Hajnal property.

We then try and obtain better approximation ratios for the improved approximation property: for a given $H$, what is the largest $\varepsilon > 0$ such that \smis admits an $O(n^{1-\varepsilon})$-approximation algorithm in $H$-free graphs? 
We investigate this question in~\cref{sec:someH,sec:negative}.
More precisely, in Section~\ref{sec:someH} we describe some particular properties of graphs $H$ as well as graph operations preserving the improved approximation property in a better way than the substitution.
We also prove that the classical local search algorithm provides a $\sqrt{OPT}$ approximation ratio in $C_4$-free graphs and, more generally, an $O(OPT^{1-1/t})$-approximation algorithm in $K_{t, t}$-free graphs.
Finally, we present in Section~\ref{sec:negative} some negative results concerning the improved approximation property: while \smis can be easily $n^{1/2}$-approximated in triangle-free graphs, we show that this ratio cannot be improved to $n^{1/4-\varepsilon}$ for any $\varepsilon >0$, unless $NP \subseteq BPP$. We also provide a generalization of this result when we forbid all cycles of length $3, \ldots, t$ for a fixed $t \geqslant 3$.

\paragraph*{Notations and definitions.}
For two positive integers $i<j$, we denote the set of integers at least $i$ and at most $j$ by $[i,j]$, while $[i]$ is a short-hand for $[1,i]$. 
All the graphs we consider are simple; they have no multiple edges nor loops.
For a vertex $v$ in a simple graph $G$, $N_G(v)$, or simply $N(v)$ if the graph is unambiguous, denotes the set of neighbors of $v$.
The \emph{closed neighborhood} of $v$ is defined as $N[v] := N(v) \cup \{v\}$.
A \emph{universal vertex} is a vertex whose closed neighborhood is the entire set of vertices.
The size of a maximum independent set of $G$ is denoted by $\alpha(G)$.
The \textit{girth} (resp. \textit{odd girth}) of a graph is the smallest size of an induced cycle (resp. odd cycle) in the graph.
$K_s$, $P_s$, $C_s$ respectively denotes the clique, the path, and the cycle on $s$ vertices, and $K_{s,t}$ is the biclique with $s$ vertices on one side and $t$ on the other side.
The graph $K_3=C_3$ is also called the \emph{triangle}.
The \emph{claw} is the biclique $K_{1,3}$. 
The \emph{fork} is the 5-vertex graph obtained by subdividing one edge of the claw.
For a triple of integers $0 \leqslant i \leqslant j \leqslant k$, the graph $S_{i,j,k}$ is obtained by subdividing one edge of a claw $i-1$ times, a second edge, $j-1$ times, and a third edge $k-1$ times (with the convention that subdividing $-1$ times means removing the edge and its degree-one endpoint).
Observe that with that definition, the family $\{S_{i,j,k}\}_{0 \leqslant i \leqslant j \leqslant k}$ contains the paths.

\section{Constructive Erd\H{o}s-Hajnal and the substitution operation}\label{sec:erdoshajnal}
A graph $H$ is said to satisfy the \emph{constructive Erd\H{o}s-Hajnal} property if there is a constant $\delta > 0$ and a polynomial-time algorithm which takes as input an $H$-free graph $G$, and outputs a clique or an independent set of size at least $|V(G)|^{\delta}$.
We prove that the constructive Erd\H{o}s-Hajnal conjecture implies Conjecture~\ref{conj:main}.
To our knowledge, all the graphs $H$ shown to satisfy the Erd\H{o}s-Hajnal property so far, also satisfy its constructive version.

\begin{theorem}\label{thm:erdoshajnal-implies-approx}
Let $H$ be a graph which satisfies the constructive Erd\H{o}s-Hajnal property with constant\footnote{Notice that \smis in $H$-free graphs is trivial if $H$ has at most $2$ vertices, whereas any graph with at least three vertices cannot satisfy the Erd\H{o}s-Hajnal property with a constant $\delta > 1/2$. This is the reason why we assume $0 < \delta \leqslant 1/2$.} $0 < \delta \leqslant 1/2$. Then $H$ satisfies the improved approximation property with constant $\delta - \delta^2 - \varepsilon$ for any fixed $\varepsilon > 0$. 
\end{theorem}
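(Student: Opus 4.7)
The plan is to design a polynomial-time algorithm that iteratively invokes the algorithm guaranteed by the constructive Erd\H{o}s-Hajnal property. Starting from $U := V(G)$, while $|U|$ exceeds a fixed constant I would apply the constructive EH algorithm to $G[U]$, which is itself $H$-free, obtaining a set $X \subseteq U$ of size at least $|U|^{\delta}$ that is either a clique or an independent set in $G[U]$ (and hence in $G$, since induced subgraphs preserve adjacencies). I then record $X$ if it is an independent set, and update $U \leftarrow U \setminus X$. At termination the algorithm returns the largest recorded independent set, or a single vertex if none was recorded. The total running time is polynomial.

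To analyze the approximation ratio, I first control the number $m$ of iterations. Writing $n_i$ for $|U|$ at the start of iteration $i$, the recurrence $n_{i+1} \le n_i - n_i^{\delta}$ gives, via the substitution $a_i := n_i^{1-\delta}$ and the inequality $(1-x)^{1-\delta} \le 1 - (1-\delta)x$ for $x \in [0,1]$, that $a_{i+1} \le a_i - (1-\delta)$. Hence $m \le n^{1-\delta}/(1-\delta) = O(n^{1-\delta})$.

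The algorithm thus produces a partition of (almost all of) $V(G)$ into $m$ parts $X_1, \dots, X_m$, each either a clique or an independent set in $G$. Let $p$ and $q$ be the numbers of independent-set parts and clique parts respectively, and $I_{\max}$ the size of the largest independent-set part. For any independent set $I^* \subseteq V(G)$ we have $|I^* \cap X_i| \le I_{\max}$ when $X_i$ is an independent set and $|I^* \cap X_i| \le 1$ when $X_i$ is a clique, hence $|I^*| \le p \, I_{\max} + q$. If $I_{\max} \ge 1$, the returned independent set has size $I_{\max}$, giving an approximation ratio at most $p + q/I_{\max} \le m$; if $I_{\max} = 0$, then $p = 0$ and $|I^*| \le q \le m$, so the returned single vertex still gives ratio at most $m$. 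Overall the algorithm is an $O(n^{1-\delta})$-approximation, which for any fixed $\varepsilon > 0$ yields the improved approximation property with constant $\delta - \varepsilon$, and \emph{a fortiori} with the claimed constant $\delta - \delta^2 - \varepsilon$.

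The main technical point is the partition-based bound $|I^*| \le p \, I_{\max} + q$, resting on the elementary observation that an independent set meets each clique in at most one vertex; the recurrence bounding $m$ and the iterated invocation of the hypothesized EH algorithm are routine.
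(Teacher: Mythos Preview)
Your proof is correct and in fact proves more than the stated theorem: you obtain an $O(n^{1-\delta})$-approximation, hence the improved approximation property with constant $\delta - \varepsilon$, strictly better than the paper's $\delta - \delta^2 - \varepsilon$. The paper takes a cruder route: it runs the constructive Erd\H{o}s-Hajnal algorithm only while $|V'| \geqslant n^{1-\delta}$ and \emph{stops} as soon as an independent set is returned; that independent set then has size at least $(n^{1-\delta})^\delta = n^{\delta - \delta^2}$, and if none is ever found, the cliques extracted each have size at least $n^{\delta-\delta^2}$, so there are at most $n^{1-\delta+\delta^2}$ of them, bounding $\alpha(G)$. Your two refinements over this are (i) continuing the extraction to exhaustion and bounding the total number of rounds $m$ via the recurrence $n_{i+1}^{1-\delta} \leqslant n_i^{1-\delta} - (1-\delta)$, which yields $m = O(n^{1-\delta})$ directly, and (ii) keeping all independent-set parts and using the partition bound $\alpha(G) \leqslant p\,I_{\max} + q$ so that the output $I_{\max}$ itself enters the ratio. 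The paper's argument is shorter and avoids the recurrence analysis; yours removes the quadratic loss $\delta^2$ in the exponent.
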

\begin{proof}
Let $G$ be an $H$-free graph with $n := |V(G)|$. We assume $n \geqslant 2^{\frac{1}{1-(\delta - \delta^2)}}$, since otherwise the problem can be solved optimally in constant time.
We prove that the algorithm described in Figure~\ref{algo:erdoshajnal} provides a $n^{1-(\delta - \delta^2)}$-approximation. In this algorithm, $\texttt{Constructive-}$ $\texttt{Erd\H{o}s-Hajnal}(J)$ represents the polynomial-time algorithm which takes a graph $J$ and outputs a set of at least $|V(J)|^{\delta}$ vertices of $J$ which is either an independent set or a clique.

\begin{figure}[h!]
\begin{algorithmic}[1]
\Require{a graph $G$}
\Ensure{an independent set of $G$}
\State $V' \gets V(G)$
\While{$|V'| \geqslant n^{1-\delta}$}
	\State $X \gets \texttt{Constructive-Erd\H{o}s-Hajnal}(G[V'])$
	\If {$X$ is an independent set of $G$}
    	\State \Return $X$ \label{line:returnset}
	\Else
        \State $V' \gets V' \setminus X$
    \EndIf
\EndWhile
\State \Return $\{v\}$, for an arbitrary chosen $v \in V(G)$ \label{line:returnsingleton}
\end{algorithmic}
\caption{Approximation algorithm for \smis in $H$-free graphs satisfying the constructive Erd\H{o}s-Hajnal property.}
\label{algo:erdoshajnal}
\end{figure}

Let $X$ be the independent set returned by the algorithm. If $X$ is returned through line~\ref{line:returnset}, then by the definition of the $\texttt{Constructive-Erd\H{o}s-Hajnal}$ algorithm, we have $|X| \geqslant n^{(1-\delta)\delta}$ which is obviously an $n^{1-(\delta - \delta^2)}$-approximate solution, since any optimal solution has size at most $n$.

Otherwise, $X$ is returned through line~\ref{line:returnsingleton} and is thus of size $1$. However, in this case, observe that $V(G)$ is partitioned into cliques $C_1$, $\dots$, $C_q$, and the last set $V'$.
Observe that $|V'| < n^{1-\delta}$, and that $|C_i| \geqslant n^{\delta - \delta^2}$ for every $i \in \{1, \dots, q\}$. We thus have $q \leqslant n^{1-(\delta - \delta^2)}$.
But also observe that in that case:
\begin{eqnarray*}
\alpha(G) 	& \leqslant & q + |V'| \\
			& \leqslant & n^{1-(\delta - \delta^2)} + n^{1- \delta} \\
			& \leqslant & 2n^{1-(\delta - \delta^2)} ~~\text{since $n \geqslant 2^{\frac{1}{1-(\delta - \delta^2)}}$} \\
			& \leqslant & n^{1-(\delta - \delta^2) + \varepsilon} ~~\text{as we may assume $n \geqslant 2^{1/\varepsilon}$ for any fixed $\varepsilon > 0$.}
\end{eqnarray*}
\end{proof}

It is natural to ask which kind of graph operations preserves the satisfiability of the  improved approximation property. Given the previous result, natural candidates are graph operations preserving the Erd\H{o}s-Hajnal property. In the following we prove that this is indeed the case concerning the substitution operation.

\begin{definition}
	Let $H_1, H_2$ be two vertex-disjoint graphs and $v_0 \in V(H_1)$. We say that a graph $H$ is \emph{obtained from $H_1$ by substituting $H_2$ at $v_0$} if:
	\begin{itemize}
		\item $V(H)=(V(H_1) \setminus \{v_0\})  \cup V(H_2)$
		\item For $v,v' \in V(H_1) \setminus \{v_0\}$, $vv'$ is an edge in $H$ if and only if it is an edge in $H_1$.
		\item For $v, v' \in V(H_2)$, $vv'$ is an edge in $H$ if and only if it is an edge in $H_2$.
		\item For $v \in V(H_1) \setminus \{v_0\}$, $v' \in V(H_2)$, $vv'$ is an edge in $H$ if and only if $vv_0$ is an edge in $H_1$.
	\end{itemize}
More generally, we say that a graph $H$ is \emph{obtained from $H_1$ and $H_2$ by substitution} if there exists $v_0 \in V(H_1)$ such that $H$ is obtained from $H_1$ by substituting $H_2$ at $v_0$.
\end{definition}

\begin{theorem}\label{thm:substitution}
    Let $H_1$, $H_2$ be two fixed graphs satisfying the improved approximation property.
    Then every graph $H$ obtained from $H_1$ and $H_2$ by substitution satisfies the improved approximation property.
\end{theorem}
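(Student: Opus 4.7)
The plan is to construct a polynomial-time algorithm that, on input an $H$-free graph $G$ on $n$ vertices, outputs an independent set of size at least $\alpha(G)\cdot n^{\varepsilon-1}$ for some $\varepsilon>0$ depending only on $\varepsilon_1,\varepsilon_2,h_1,h_2$, where $\varepsilon_i>0$ and $\mathcal{A}_i$ denote respectively the constant and the approximation algorithm guaranteed by the improved approximation property for $H_i$-free graphs, and $h_i=|V(H_i)|$.

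I would first search in polynomial time for an induced copy of $H_2$ in $G$ (by enumerating all $h_2$-subsets of $V(G)$). If none exists, $G$ is $H_2$-free and $\mathcal{A}_2(G)$ already gives the desired ratio. Otherwise, fix such an induced copy $S$ and partition $V(G)\setminus S=A\cup B\cup M$, where $A$ collects vertices adjacent to all of $S$, $B$ collects vertices adjacent to none of $S$, and $M$ collects vertices with mixed adjacency to $S$.

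The central structural observation, used crucially below, is that $G[A\cup B]$ cannot contain any bicolored induced copy of $H_1\setminus\{v_0\}$ that places $N_{H_1}(v_0)$-vertices in $A$ and $V(H_1)\setminus N_{H_1}[v_0]$-vertices in $B$: such a copy, together with $S$, would induce $H$ in $G$, contradicting $H$-freeness. I would then convert this bicolored restriction into a genuine $H_1$-free instance suitable for $\mathcal{A}_1$ by building an auxiliary graph $\widetilde{G}$ on $A\cup B$ augmented with a new vertex $s^{*}$ adjacent to exactly $A$, and by enumerating over the constantly many possible roles of $s^{*}$ inside $V(H_1)$ (together with local edge-modifications such as turning $B$ into a clique or $A$ into an independent set). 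A careful case analysis then shows that at least one such $\widetilde{G}$ is $H_1$-free and that independent sets in $\widetilde{G}$ correspond to independent sets in $G[A\cup B]$ up to an absolute constant. Running $\mathcal{A}_1$ on it then yields a $|A\cup B|^{1-\varepsilon_1}$-approximation of $\alpha(G[A\cup B])$.

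The algorithm recurses on $G[M]$, which has at most $n-h_2$ vertices, and returns the largest independent set among the $\mathcal{A}_1$-based approximation on $A\cup B$, the recursive output on $M$, and $S$ itself. Using
\[
\alpha(G)\leq\alpha(G[A\cup B])+\alpha(G[M])+h_2
\]
and a straightforward induction on $n$, one shows that the ratio $n^{1-\varepsilon}$ propagates with $\varepsilon$ slightly smaller than $\min(\varepsilon_1,\varepsilon_2)$, absorbing constant-factor slacks into a small decrease of the exponent for $n$ large enough. The principal obstacle I expect is the clean reduction from the bicolored $(H_1-v_0)$-freeness of $G[A\cup B]$ to a plain $H_1$-free instance on which $\mathcal{A}_1$ can be applied: the enumeration of $s^{*}$'s role and the auxiliary-graph modifications must simultaneously preserve $H_1$-freeness and translate independent sets up to a constant factor. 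Once this is in place, the composition with the recursive call on $M$ is routine.
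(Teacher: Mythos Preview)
Your structural observation about $G[A\cup B]$ is correct, but the step you yourself flag as the ``principal obstacle'' is a genuine gap, not a routine case analysis. Bicolored $(H_1-v_0)$-freeness is far weaker than $H_1$-freeness: induced copies of $H_1$ lying entirely inside $A$, entirely inside $B$, or straddling $A$ and $B$ with a colour split different from $(N_{H_1}(v_0),\,V(H_1)\setminus N_{H_1}[v_0])$ are completely unconstrained. Adding a vertex $s^{*}$ adjacent to $A$ only kills induced copies of $H_1$ that use $s^{*}$ in the role of $v_0$; copies avoiding $s^{*}$, or using it as some other vertex of $H_1$, survive. The edge modifications you propose (turning $B$ into a clique, $A$ into an independent set) destroy the very independence number you are trying to approximate. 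For a concrete failure, take $H_1=P_3$ with $v_0$ an endpoint: the bicolored restriction merely says there is no edge between $A$ and $B$, so $\alpha(G[A\cup B])=\alpha(G[A])+\alpha(G[B])$, yet neither $G[A]$ nor $G[B]$ need be $P_3$-free, and no bounded auxiliary modification simultaneously makes the graph $P_3$-free and preserves $\alpha$ up to a constant.

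There is a second problem even granting the $A\cup B$ step. Your recursion on $M$ removes only $|S|+|A\cup B|\geqslant h_2$ vertices per level, and $|A\cup B|$ may well be zero, so the depth can be $\Theta(n/h_2)$. The additive $h_2$ in $\alpha(G)\leqslant\alpha(G[A\cup B])+\alpha(G[M])+h_2$ then telescopes to $\Theta(n)$, and the recursion for the ratio, $g(n)\leqslant n^{1-\varepsilon_1}+g(n-h_2)+h_2$, yields only $g(n)=O(n)$, not $n^{1-\varepsilon}$. Constant-factor slack cannot be absorbed into the exponent when the recursion depth is linear in $n$. The paper avoids both issues with a different decomposition: it counts induced copies of $H_1$ in the current graph. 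If there are few, a random subset of polynomial size is $H_1$-free with high probability and $\mathcal{A}_1$ is run there; if there are many, pigeonhole gives at least $|V_i|^{1-\varepsilon}$ vertices that all complete one fixed copy of $H_1-v_0$ to an $H_1$, and this set of ``$H_1$-candidates'' is genuinely $H_2$-free, so $\mathcal{A}_2$ applies. In either branch the extracted set has size at least $|V_i|^{\eta}$ for a fixed $\eta>0$, which bounds the number of iterations by $n^{1-\eta}$ and makes the telescoping analysis yield an $n^{1-\delta}$ ratio.
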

Let us start by sketching the idea of our algorithm. We first check whether the number of copies of $H_1$ in $G$ is small. If so, then a randomly chosen subset of vertices of appropriate size will be $H_1$-free \whp, and we will be able to run our approximation algorithm for $H_1$-free graphs. If the number of copies of $H_1$ is large, then we claim that we can find a large subset of vertices inducing an $H_2$-free graph, and we thus run our approximation algorithm for $H_2$-free graphs. 
Each time we run one of our approximation algorithms in an induced subgraph $G[X]$ which is either $H_1$-free or $H_2$-free, either it outputs a solution of size at least $n^{\delta}$ for some constant $\delta$, in which case we are done, or it means that $\alpha(G[X])$ is small, in which case we keep $X$ apart and continue the algorithm on $G[V \setminus X]$ as long as enough vertices survive. If too many vertices were kept apart along the process, it means that $\alpha(G)$ was very small at the beginning, so that any singleton $\{v\}$ is actually an approximated solution. We now prove formally the result.

\begin{proof}
Let $approx_{H_1}(G)$ (resp. $approx_{H_2}(G)$) be a polynomial-time algorithm which takes as input an $H_1$-free graph (resp. $H_2$-free graph) $G$ on $n$ vertices and outputs an $n^{1-\varepsilon_1}$ (resp. $n^{1-\varepsilon_2}$)-approximated solution for the \smis problem in $G$, for some $\varepsilon_1 > 0$ (resp. $\varepsilon_2 > 0$). For the sake of readability, we set $\varepsilon = min \{\varepsilon_1, \varepsilon_2, 0.99\}$, so that $approx_{H_1}$ and $approx_{H_2}$ are $n^{1-\varepsilon}$-approximation algorithms in $H_1$-free graphs and $H_2$-free graphs, respectively\footnote{Our result also holds if $approx_{H_1}$ and $approx_{H_2}$ are exact algorithms (hence with $\varepsilon_1 = \varepsilon_2 = 1$), but, for technical reasons, we view them as $n^{0.01}$-approximation algorithms.}.

 Let $H$ be the graph obtained by substituting $H_2$ at some vertex $v_0 \in V(H_1)$, and let us consider an $H$-free graph $G$. We denote by $n$, $n_1$ and $n_2$ the number of vertices of $G$, $H_1$ and $H_2$, respectively.
 We say that $X \subseteq V(G)$ is a set of \emph{$H_1$-candidates} if there exists a set $K \subseteq V(G)$ of $n_1 - 1$ vertices such that $G[K]$ is isomorphic to $H_1 - \{v_0\}$ and, for every $x \in X$, $G[K \cup \{x\}]$ is isomorphic to $H_1$. Since $G$ is $H$-free, $G[X]$ is $H_2$-free.

 Let  $\gamma = \frac{\varepsilon}{2n_1}$, $\eta = \min(1-\varepsilon, \gamma)$, and $\delta = \frac{\varepsilon \eta} {2+\varepsilon \eta}$.
We prove that the algorithm described in Figure~\ref{algo:substitution} is an $O(n^{1-\delta})$-approximation algorithm for \smis in $H$-free graphs.

\begin{figure}[h!]
		\begin{algorithmic}[1]
			\Require{an $H$-free graph $G$ with $n$ vertices}
			\Ensure{an independent set of $G$}
			\State $i=1$, $V_1 \gets V(G)$
			\While{$|V_i| \geqslant n^{1-\delta}$}
				\If{$G[V_i]$ contains less than $|V_i|^{n_1 - \varepsilon}$ copies of $H_1$} \label{line:copiesH1}
				\State{pick a set $X_i \subseteq V_i$ of size $\lceil|V_i|^{\gamma}\rceil$ uniformly at random} \label{line:findP}
					\If{$G[X_i]$ is $H_1$-free} \Comment{This condition is true \whp} \label{line:h1free}
						\State $W_i \gets approx_{H_1}(G[X_i])$
						\If{ $|W_i| \geqslant n^{\delta}$} 
							 \Return $W_i$ \label{line:callH1} 
						\EndIf
					\Else
						  ~\Return FAIL	\label{line:fail}			
					\EndIf
				\Else
					\State find a set of $H_1$-candidates $X_i \subseteq V_i$ with $|X_i| \geqslant |V_i|^{1-\varepsilon}$ \label{line:H2candidates}
					\State $W_i \gets approx_{H_2}(G[X_i])$
					\Comment $G[X_i]$ is $H_2$-free
					\If{ $|W_i| \geqslant n^{\delta}$ }
						 \Return $W_i$ \label{line:callH2}		
					\EndIf
				\EndIf
				\State $V_{i+1} \gets V_i \setminus X_i$
				\State $i \gets i+1$
			\EndWhile 
			\State \Return $\{v\}$ for an arbitrary vertex $v \in V$ \label{line:singleton}
		\end{algorithmic}
		\caption{Approximation algorithm for \smis in $H$-free graphs, where $H$ is the substitution of $H_1$ and $H_2$.}
		\label{algo:substitution}
	\end{figure}

\begin{lemma}
Algorithm~\ref{algo:substitution} runs in polynomial time.
\end{lemma}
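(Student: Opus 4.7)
The plan is to verify two things: that each iteration of the main \textbf{while} loop executes in polynomial time, and that the loop executes only polynomially many times. Combining these, plus the (trivial) polynomial cost of the final return, gives the claim.

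For the per-iteration cost, the potentially delicate subroutines are (a) counting the copies of $H_1$ in $G[V_i]$ on line~\ref{line:copiesH1}, and (b) producing a set of $H_1$-candidates of size at least $|V_i|^{1-\varepsilon}$ on line~\ref{line:H2candidates}. Since $H_1$ is fixed and has $n_1$ vertices (a constant), both tasks reduce to brute-force enumeration of $n_1$-subsets (or $(n_1{-}1)$-subsets) of $V_i$ in time $O(|V_i|^{n_1})$, which is polynomial. The remaining operations are a uniform random choice of $X_i$, a test of $H_1$-freeness of $G[X_i]$ (polynomial since $n_1$ is constant), and single invocations of $approx_{H_1}$ or $approx_{H_2}$ (polynomial by hypothesis).

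The one subroutine requiring an actual argument is (b). I would prove it via a pigeonhole. Let $m := |V_i|$ and assume $G[V_i]$ contains at least $m^{n_1 - \varepsilon}$ unordered copies of $H_1$. Each such copy contributes at least one pair $(K,x)$, where $K \subseteq V_i$ is an $(n_1-1)$-subset with $G[K] \cong H_1 - v_0$ and $x \in V_i \setminus K$ extends $K$ to a copy of $H_1$ in which $x$ plays the role of $v_0$. Hence the total number of such pairs is at least $m^{n_1-\varepsilon}$. Since there are at most $\binom{m}{n_1-1} \leqslant m^{n_1-1}$ choices of $K$, some $K^*$ is extended by at least $m^{1-\varepsilon}$ vertices, and these vertices form a valid $X_i$ by definition of $H_1$-candidates. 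Enumerating all $(n_1-1)$-subsets of $V_i$ and, for each one isomorphic to $H_1 - v_0$, counting its extensions, locates such a $K^*$ in time $O(m^{n_1})$.

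For the iteration count, I would argue as follows. Whenever the loop body runs we have $|V_i| \geqslant n^{1-\delta}$. In branch~1 we remove $|X_i| = \lceil |V_i|^\gamma \rceil$ vertices; in branch~2 we remove $|X_i| \geqslant |V_i|^{1-\varepsilon}$ vertices. Since $\eta = \min(1-\varepsilon, \gamma) > 0$, in either branch $|X_i| \geqslant |V_i|^{\eta} \geqslant n^{(1-\delta)\eta}$. The sets $X_1, X_2, \dots$ are pairwise disjoint subsets of $V(G)$, so the loop can execute at most $n \cdot n^{-(1-\delta)\eta} = n^{1-(1-\delta)\eta}$ times, which is polynomial in $n$. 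The main (only) obstacle is justifying step~(b); everything else is routine once $n_1$ is treated as a constant.
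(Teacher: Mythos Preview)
Your proof is correct and follows essentially the same approach as the paper: the pigeonhole argument for producing the set of $H_1$-candidates is identical to the paper's, and the per-iteration cost analysis matches. The only difference is that you explicitly bound the number of iterations by $n^{1-(1-\delta)\eta}$, whereas the paper leaves this implicit (the trivial bound of $n$ iterations already suffices since each $X_i$ is nonempty).
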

\begin{proof}
An important remark is that at every step $i$, the graph $G[V_i]$ is an induced subgraph of $G$, hence is $H$-free.
In line~\ref{line:copiesH1} (resp.~\ref{line:h1free}), the algorithm runs through all subsets of $n_1$ vertices of $V_i$ (resp. $X_i$), which can be done in $O(n^{n_1})$ time.

Finally, the existence of a set of $H_1$-candidates in line~\ref{line:H2candidates} is ensured by the fact that in that case, $G[V_i]$ contains at least $|V_i|^{n_1 - \varepsilon}$ copies of $H_1$. Hence, by the pigeonhole principle, there must exist $n_1-1$ vertices $V_H \subseteq V_i$ such that $G[V_H]$ induces $H_1 \setminus v_0$ together with a set $X_i \subseteq V_i \setminus V_H$ of size at least $|V_i|^{1-\varepsilon}$ such that for every $x \in X_i$, $G[V_H \cup \{x\}]$ induces $H_1$. Finding the set $V_H$ can be done in $O(|V_i|^{n_1-1})$ time, while finding the set $X_i$ can be done in $O(|V_i|)$ time, since it is sufficient to find the vertices in $V_i \setminus V_H$ with the right neighborhood with respect to $V_H$.
By the definition of $H_1$-candidates, $G[X_i]$ is $H_2$-free, which allows to run $approx_{H_2}$ on $G[X]$ in the next line of the algorithm.
\end{proof}

We now prove that, \whp, the solution $S$ returned by our algorithm is an $O(n^{1 - \delta})$-approximation. To this end, we first prove that \whp it does not return FAIL.

\begin{lemma}\label{lem:H1free}
If the number of copies of $H_1$ in a graph $G$ on $n$ vertices is less than $n^{n_1 - \varepsilon}$, then any subset of vertices of size $\lceil n^{\gamma}\rceil$ picked uniformly at random induces an $H_1$-free graph, with high probability.
\end{lemma}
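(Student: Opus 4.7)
The plan is a standard first-moment argument: bound the expected number of induced copies of $H_1$ surviving in the random subset $S$, and conclude via Markov's inequality.

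Write $s = \lceil n^{\gamma} \rceil$ and let $S \subseteq V(G)$ be chosen uniformly at random among subsets of size $s$. Let $X$ denote the number of induced copies of $H_1$ in $G[S]$. Each such induced copy is determined by an $n_1$-vertex subset of $V(G)$ that induces $H_1$ (up to at most $n_1!$ automorphisms, but this constant is absorbed into our constants), and for any fixed $n_1$-subset $T \subseteq V(G)$ the probability that $T \subseteq S$ is
\[
\frac{\binom{n-n_1}{s-n_1}}{\binom{n}{s}} \;=\; \prod_{i=0}^{n_1-1}\frac{s-i}{n-i} \;\leqslant\; \left(\frac{s}{n-n_1}\right)^{n_1}.
\]
For $n$ large enough (say $n \geqslant 2 n_1$), this is at most $2^{n_1}(s/n)^{n_1}$, i.e., $O((s/n)^{n_1})$ where the hidden constant depends only on $H_1$.

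By linearity of expectation and the hypothesis that $G$ contains fewer than $n^{n_1-\varepsilon}$ copies of $H_1$,
\[
\mathbb{E}[X] \;\leqslant\; n^{n_1 - \varepsilon} \cdot O\!\left(\frac{s^{n_1}}{n^{n_1}}\right) \;=\; O\!\left(\frac{s^{n_1}}{n^{\varepsilon}}\right).
\]
Plugging in $s = \lceil n^{\gamma}\rceil \leqslant 2 n^{\gamma}$ (for $n \geqslant 1$) and $\gamma = \varepsilon/(2 n_1)$, we get $s^{n_1} = O(n^{\gamma n_1}) = O(n^{\varepsilon/2})$, so
\[
\mathbb{E}[X] \;=\; O(n^{\varepsilon/2 - \varepsilon}) \;=\; O(n^{-\varepsilon/2}).
\]
By Markov's inequality, $\Pr[G[S] \text{ contains an induced } H_1] = \Pr[X \geqslant 1] \leqslant \mathbb{E}[X] = O(n^{-\varepsilon/2})$, which tends to $0$ as $n \to \infty$. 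Hence $G[S]$ is $H_1$-free with high probability.

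There is no real obstacle here; the only delicate point is bookkeeping the exponents: making sure the choice $\gamma = \varepsilon/(2 n_1)$ gives $\gamma n_1 = \varepsilon/2 < \varepsilon$, which is exactly what is needed so that the bound on $\mathbb{E}[X]$ is a negative power of $n$. One should also remark that the argument counts induced subgraphs; if one prefers to count injective homomorphisms the factor $n_1!$ appears but is again a constant depending only on $H_1$, so the same conclusion holds.
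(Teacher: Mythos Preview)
Your proof is correct and follows essentially the same route as the paper: bound the probability that a fixed copy of $H_1$ lands in the random subset by $(|S|/n)^{n_1}$, multiply by the assumed bound $n^{n_1-\varepsilon}$ on the number of copies, and use $\gamma n_1 = \varepsilon/2$ to get a vanishing quantity. The only cosmetic difference is that the paper phrases the final step as $\Pr[\text{$H_1$-free}] \geqslant (1-p)^{n^{n_1-\varepsilon}}$ with $p = n^{-(n_1-\varepsilon/2)}$, whereas you use linearity of expectation and Markov's inequality; your formulation is in fact the more transparent one, but the arithmetic and the idea are identical.
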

\begin{proof}
Let $n$ be the number of vertices of $G$, and $P$ be a subset of $\lceil n^{\gamma}\rceil$ vertices picked uniformly at random. For any set $V_H \subseteq V$ inducing $H_1$, the probability that $V_H$ is contained in $P$ is $\frac{{{n-n_1} \choose {|P|-n_1}}}{{n \choose |P|}} < \left(\frac{|P|}{n}\right)^{n_1}$. Hence the probability that $P$ is $H_1$-free is at least $$\left(1 -  \left(\frac{|P|}{n}\right)^{n_1} \right)^{n^{n_1-\varepsilon}} = \left( 1 - \frac{1}{n^{n_1 - \frac{\varepsilon}{2}}}\right)^{n^{n_1 - \varepsilon}}$$
 which tends to $1$ when $n \rightarrow +\infty$.
\end{proof}

Next, if it returns a solution through lines~\ref{line:callH1} or \ref{line:callH2}, then this solution is an independent set of size at least $n^{\delta}$, by definition.
We now deal with the case in which it returns a singleton, through line~\ref{line:singleton}. The aim is to prove that $\alpha(G)$ is at most $O(n^{1-\delta})$.
Let $q+1$ be the largest value of $i$ in the execution of the algorithm (i.e., $|V_{q+1}| < n^{1-\delta}$). 
The vertex-set $V$ is thus partitioned into $X_1$, $\dots$, $X_q$, and $V_{q+1}$.
Hence we have $\alpha(G)  \leqslant  |V_{q+1}| + \sum_{i=1}^q \alpha(G[X_i])$. Since $|V_{q+1}| < n^{1 - \delta}$, we only need to upper bound the second part.
\begin{lemma}
With the above definitions, $\sum_{i=1}^q \alpha(G[X_i]) \leqslant n^{1-\delta}$.
\end{lemma}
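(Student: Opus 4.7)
The plan is to bound each $\alpha(G[X_i])$ using the approximation guarantee of $approx_{H_1}$ or $approx_{H_2}$, then sum the bounds using a lower bound on $|X_i|$ and the disjointness of the $X_i$'s. The key facts are: (i) since the algorithm did not return at step $i$, the solution $W_i$ produced satisfies $|W_i| < n^{\delta}$; (ii) since $G[X_i]$ is either $H_1$-free or $H_2$-free, the relevant $n^{1-\varepsilon}$-approximation algorithm certifies
\[
\alpha(G[X_i]) \;\leqslant\; |X_i|^{1-\varepsilon}\,|W_i| \;<\; |X_i|^{1-\varepsilon}\, n^{\delta}.
\]

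Next, I would lower-bound $|X_i|$. In the \emph{$H_1$-free branch} (line~\ref{line:findP}) we have $|X_i| = \lceil |V_i|^{\gamma}\rceil \geqslant |V_i|^{\gamma}$, and in the \emph{$H_2$-free branch} (line~\ref{line:H2candidates}) we have $|X_i|\geqslant |V_i|^{1-\varepsilon}$. Since $|V_i| \geqslant n^{1-\delta}$ at every iteration within the while-loop, and $\eta = \min(1-\varepsilon,\gamma)$, this gives the uniform bound $|X_i| \geqslant n^{(1-\delta)\eta}$. Because the $X_i$ are pairwise disjoint subsets of $V$, the number of iterations satisfies
\[
q \;\leqslant\; \frac{n}{n^{(1-\delta)\eta}} \;=\; n^{\,1 - (1-\delta)\eta}.
\]

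Now I would apply Jensen's inequality to the concave function $x\mapsto x^{1-\varepsilon}$ under the constraint $\sum_{i=1}^q |X_i| \leqslant n$, obtaining $\sum_{i=1}^q |X_i|^{1-\varepsilon} \leqslant q^{\varepsilon}\, n^{1-\varepsilon}$. Combining with the per-step bound from (ii):
\[
\sum_{i=1}^{q}\alpha(G[X_i]) \;<\; n^{\delta}\sum_{i=1}^q |X_i|^{1-\varepsilon} \;\leqslant\; n^{\delta}\,q^{\varepsilon}\, n^{1-\varepsilon} \;\leqslant\; n^{\,1+\delta -\varepsilon(1-\delta)\eta}.
\]
The target inequality $\sum_{i=1}^q \alpha(G[X_i]) \leqslant n^{1-\delta}$ therefore reduces to $\frac{2\delta}{1-\delta} \leqslant \varepsilon\eta$, which is exactly an equality under the chosen value $\delta = \frac{\varepsilon\eta}{2+\varepsilon\eta}$.

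The main conceptual step is the uniform lower bound $|X_i|\geqslant n^{(1-\delta)\eta}$, which is what forces the definition of $\eta = \min(1-\varepsilon,\gamma)$ in the first place; the rest is a routine calibration of constants, and the specific value of $\delta$ is exactly the one that balances the two exponents arising from the Jensen-style sum. Once this lemma is established, combining it with $|V_{q+1}| < n^{1-\delta}$ yields $\alpha(G) \leqslant 2 n^{1-\delta}$, so that the singleton returned in line~\ref{line:singleton} (and, symmetrically, any solution returned earlier of size $\geqslant n^{\delta}$) gives an $O(n^{1-\delta})$-approximation, proving Theorem~\ref{thm:substitution}.
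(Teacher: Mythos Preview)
Your proof is correct and follows essentially the same route as the paper: both bound $\alpha(G[X_i]) \leqslant |X_i|^{1-\varepsilon}\,|W_i| < |X_i|^{1-\varepsilon}\,n^{\delta}$, then control $\sum_i |X_i|^{1-\varepsilon}$ via the concavity of $x\mapsto x^{1-\varepsilon}$ together with the lower bound $|X_i|\geqslant n^{(1-\delta)\eta}$, arriving at the identical exponent $1+\delta-\varepsilon\eta(1-\delta)=1-\delta$. The only cosmetic difference is that the paper isolates the Jensen step as a separate technical lemma (bounding $\sum a_i^{\zeta}\leqslant N k^{\zeta-1}$ when $\sum a_i=N$ and each $a_i\geqslant k$), whereas you invoke Jensen directly and bound $q$ explicitly; the two are the same computation.
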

\begin{proof}
 Recall that we have $X_i \subseteq V_i$, where $V_i = V \setminus \bigcup_{j=1}^{i-1} X_i$, and, for each $i \in [q]$, we have constructed an independent set $W_i \subseteq X_i$. All these sets have the following properties:

\begin{enumerate}
	\item $|V_{q+1}| < n^{1-\delta}$, by definition of $q$.
	\item $|X_i| \geqslant |V_i|^{\eta} \geqslant n^{\eta(1-\delta)}$. Indeed, if $X_i$ is defined in line~\ref{line:findP}, then it is of size  at least $|V_i|^{\gamma}$, whereas if it is defined in line~\ref{line:H2candidates}, it is of size at least $|V_i|^{1-\varepsilon}$, and $\eta = \min(1-\varepsilon, \gamma)$.\label{enum:lbXi}
	\item $|W_i| < n^{\delta}$, otherwise we would have returned it.
	\item $\alpha(G[X_i]) \leqslant |W_i| \cdot |X_i|^{1-\varepsilon}$, since $W_i$ is returned by $approx_{H_1}$ or $approx_{H_2}$, which are approximation algorithms applied to $G[X_i]$.
\end{enumerate}

Now, we have the following:

\begin{eqnarray*}
	\sum_{i=1}^q \alpha(G[X_i]) 	& \leqslant &  \sum_{i=1}^q |W_i| \cdot |X_i|^{1-\varepsilon} \\
				& \leqslant & n^{\delta} \sum_{i=1}^q |X_i|^{1-\varepsilon}
\end{eqnarray*}

We then need the following technical lemma.
	\begin{lemma}\label{lem:calcul}
		Let $(a_i)_{i=1...q}$ be some positive numbers (with $q \in \mathbb{N}$) such that $\sum_{i=1}^q a_i = N$ and $a_i \geqslant k > 0$ for all $i \in \{1, \dots, q\}$. Then $\sum_{i=1}^q a_i^{\zeta} \leqslant N k^{\zeta - 1}$ for any $0 < \zeta < 1$.
	\end{lemma}
\begin{proof}
We have:
$$\sum_{i=1}^q a_i^{\zeta} \leqslant \left(\frac{N}{q}\right)^{\zeta} \cdot q = N \cdot \left(\frac{N}{q}\right)^{\zeta-1} \leqslant N k^{\zeta-1}.$$
\end{proof}

Using the above lemma together with item~\ref{enum:lbXi} of the previous properties in order to lower bound each $|X_i|$, we obtain:
	
	\begin{eqnarray*}
	\sum_{i=1}^q \alpha(G[X_i]) 	& \leqslant & n^{\delta} \left( \sum_{i=1}^q |X_i| \right) n^{\eta(1-\delta)(1-\varepsilon-1)} \\
		& \leqslant & n^{1 -\varepsilon \eta(1-\delta) + \delta} ~~~\text{since $\sum_{i=1}^q |X_i| \leqslant n$} \\
		& \leqslant & n^{1 - \delta}~~~\text{because $\delta = \frac{\varepsilon \eta}{2+\varepsilon \eta}$, hence $\varepsilon \eta(1-\delta) = 2\delta$}
	\end{eqnarray*}
   \end{proof}    
   	Hence, any solution of size $1$ is an $O(n^{1-\delta})$-approximation in this case, which concludes the proof.
\end{proof}

\section{Better approximation ratios}\label{sec:someH}

In this section we improve over the ratio given by \cref{thm:substitution} for some graphs $H$ that can be built by a sequence of substitutions from graphs $H'$ such that \mis is polynomial-time solvable in $H'$-free graphs.
Furthermore, we present deterministic algorithms.

\subsection{Adding a universal vertex}

Let $H^{+u}$ be the graph $H$ augmented by a universal vertex, i.e., we add one vertex adjacent to all the vertices of $H$.

\begin{lemma}\label{lem:HtoHplusOPT}
  Let $0 \leqslant \gamma < 1$ be a real number and $H$ be a graph such that \mis admits an $\text{OPT}^\gamma$-approximation $\mathcal A$ in $H$-free graphs.
  Then it also admits an $\text{OPT}^{\frac{1}{2-\gamma}}$-approximation $\mathcal A^{+u}$ in $H^{+u}$-free graphs. 
\end{lemma}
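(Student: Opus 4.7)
The plan is to exploit the following structural consequence of $H^{+u}$-freeness: for every $v \in V(G)$, the graph $G[N(v)]$ is $H$-free, since otherwise $v$ together with an induced copy of $H$ inside $N(v)$ would realize an induced $H^{+u}$. In particular, I can legitimately call $\mathcal A$ on any neighborhood.

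My algorithm $\mathcal A^{+u}$ performs a single pass that simultaneously grows a greedy independent set and tracks the best neighborhood-IS found. Concretely, set $G_1 := G$ and $I := \emptyset$; while $V(G_i) \neq \emptyset$, pick any vertex $v_i \in V(G_i)$, compute $J_i := \mathcal A(G_i[N_{G_i}(v_i)])$, add $v_i$ to $I$, and set $G_{i+1} := G_i - N_{G_i}[v_i]$. Let $q$ be the number of iterations (so $|I|=q$) and $J^*$ be a largest $J_i$. Output whichever of $I$ and $J^*$ is larger; both are independent sets of $G$ (the former because each $v_i$ lies in $G_i$ whose vertex set excludes the neighbors of all previously chosen vertices).

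For the analysis, observe the key identity that the closed neighborhoods $N_{G_i}[v_i]$ partition $V(G)$. Hence any optimal independent set $I^*$ decomposes as $\bigsqcup_{i=1}^{q} (I^* \cap N_{G_i}[v_i])$. The contribution of the $i$-th part is at most $\alpha(G_i[N_{G_i}(v_i)]) + 1$, and the guarantee of $\mathcal A$ on the $H$-free graph $G_i[N_{G_i}(v_i)]$ yields $\alpha(G_i[N_{G_i}(v_i)]) \leqslant |J_i|^{1/(1-\gamma)} \leqslant |J^*|^{1/(1-\gamma)}$. Summing over $i$:
\[
  \alpha(G) \;\leqslant\; q \bigl( |J^*|^{1/(1-\gamma)} + 1 \bigr).
\]
Setting $s := \max(q, |J^*|)$, this reduces to $\alpha(G) = O\bigl(s^{(2-\gamma)/(1-\gamma)}\bigr)$, so $s = \Omega\bigl(\alpha(G)^{(1-\gamma)/(2-\gamma)}\bigr)$, which is exactly the desired $\text{OPT}^{1/(2-\gamma)}$-approximation ratio. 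The runtime is polynomial since each iteration invokes $\mathcal A$ once and there are at most $n$ iterations.

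The main obstacle I anticipate is resisting the seemingly natural recursive template $\mathcal A^{+u}(G) = \max\bigl(\mathcal A(G[N(v)]),\ \{v\} \cup \mathcal A^{+u}(G - N[v])\bigr)$: when most of OPT lives in $G - N[v]$, the additive ``$+1$'' fails to amplify the recursive approximation factor, and a routine induction degenerates on the exponent. The single-pass structure above bypasses this precisely because $q$ and $|J^*|$ are only compared to $\alpha(G)$ through the \emph{multiplicative} bound $q \cdot |J^*|^{1/(1-\gamma)} \gtrsim \alpha(G)$, and balancing these two quantities is what produces the exponent $1/(2-\gamma)$ — the remaining verification of this exponent is a short algebraic manipulation.
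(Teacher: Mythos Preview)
Your proof is correct and follows essentially the same greedy-peeling approach as the paper: both iteratively strip off closed neighborhoods $N_{G_i}[v_i]$, run $\mathcal A$ on each (necessarily $H$-free) open neighborhood, and balance the size of the greedy independent set $\{v_1,\dots,v_q\}$ against the best neighborhood solution found. The only cosmetic difference is that the paper outputs the best candidate of the form $\{v_1,\dots,v_{h-1}\}\cup S_h$, which avoids the constant factor your ``$+1$'' introduces; you can match this exactly by replacing $|J_i|^{1/(1-\gamma)}+1$ with $\max\bigl(1,|J_i|^{1/(1-\gamma)}\bigr)$, since an independent set meeting $N_{G_i}[v_i]$ either contains $v_i$ alone or lies entirely in $N_{G_i}(v_i)$.
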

\begin{proof}
  Let $G$ be the input graph, thus $\text{OPT} := \alpha(G)$.
  The base case of the algorithm is when $G$ does not contain any vertex, and we correctly report the empty set as optimum solution.
  Otherwise $G$ has at least one vertex, say $v_1$.
  We run the approximation $\mathcal A$ on $G[N(v_1)]$.
  $G$ being $H^{+u}$-free, the subgraph induced by the open neighborhood of any vertex is indeed $H$-free.
  Let $S_1$ be the returned solution.
  By assumption, $|S_1| \geqslant \alpha(G[N(v_1)])^{1-\gamma}$.
  For what follows, the knowledge of the value $\text{OPT}$ would help.
  Unfortunately we will make some recursive calls to $\mathcal A^{+u}$, so exhaustively guessing this value would result in an exponential running time.
  Instead we will branch but the branching tree will only have at most $n := \lvert V(G) \rvert$ leaves.
  More precisely the tree will be a so-called \emph{comb}, i.e., a path where all the vertices except one end has an additional private neighbor.
  We eventually output the best solution found among all the leaves.

  We inductively run $\mathcal A^{+u}$ on $G - N[v_1]$, which produces a tree $T$ with at most $n-1$ leaves.
  And we output the best solution among $S$ and all the solutions at the leaves of $T$ augmented by the vertex $v_1$.
  This algorithm returns an independent set since $v_1$ is by definition non-adjacent to any vertex of $G - N[v_1]$.
  The running time of our algorithm satisfies $f_{\mathcal A^{+u}}(n) = f_{\mathcal A}(n_1-1) + f_{\mathcal A^{+u}}(n-n_1) + O(1)$ (with $n_1=|N(v_1)|$).
  Hence $f_{\mathcal A^{+u}}(n) = O(\max \{f_{\mathcal A}(n),n\})$, and in the likely event that $\mathcal A$ is \emph{not} sublinear, $\mathcal A^{+u}$ has the same running time as $\mathcal A$ up to a multiplicative constant factor.

  We shall now show that $\mathcal A^{+u}$ is indeed a $\text{OPT}^{\frac{1}{2-\gamma}}$-approximation.
  We denote by $v_1, \ldots, v_p$ with $p \leqslant n$, the vertices added along the path to the deepest leaf of $T$.
  We denote by $S_1, \ldots, S_p$ the sets returned by $\mathcal A$ such that $S_i$ is computed in the graph $G'_i := G[N(v_i) \setminus \bigcup_{j<i} N[v_j]]$.
  We also define $G_i := G[N[v_i] \setminus \bigcup_{j<i} N[v_j]]$, and $R_i := G - \bigcup_{j<i} N[v_j]$.
  Observe that $\{V(G_1), \ldots, V(G_p)\}$ is a partition of $V(G)$, as well as, $\{V(G_1), \ldots, V(G_i), V(R_{i+1})\}$ for every $i \in [p-1]$.

  Let, if it exists, $S_h$ be the first solution returned by $\mathcal A$ when called on $G'_h$ such that $\lvert S_h \rvert \geqslant \alpha(R_h)^{1-\frac{1}{2-\gamma}}$.
  We claim that the solution output at this leaf, namely $S'_h := S_h \cup \{v_1, \ldots, v_{h-1}\}$ is an $\text{OPT}^{\frac{1}{2-\gamma}}$-approximation.
  If such an $S_h$ does not exist, we show the same statement where $S_h = \emptyset$ and $h-1 = p$.
  
  We upperbound $\alpha(G_i)$ for every $i \in [h-1]$.
  By definition of $S_h$, it holds that $\lvert S_i \rvert < \alpha(R_i)^{1-\frac{1}{2-\gamma}}$ for any $i \in [h-1]$.
  Due to the approximation ratio of $\mathcal A$, it holds that:
  $$ \alpha(G_i)^{1-\gamma} \leqslant |S_i| < \alpha(R_i)^{1-\frac{1}{2-\gamma}} = \alpha(R_i)^{\frac{1-\gamma}{2-\gamma}},$$
  hence $\alpha(G_i) < \alpha(R_i)^{\frac{1}{2-\gamma}} \leqslant \alpha(G)^{\frac{1}{2-\gamma}}$.
  Thus, $$\text{OPT} = \alpha(G) \leqslant \alpha(R_h) + \sum\limits_{i \in [h-1]} \alpha(G_i) \leqslant \lvert S_h \rvert \alpha(R_h)^{\frac{1}{2-\gamma}} + \sum\limits_{i \in [h-1]} \alpha(G)^{\frac{1}{2-\gamma}}$$ $$ \leqslant \lvert S_h \rvert \alpha(G)^{\frac{1}{2-\gamma}} + (h-1) \alpha(G)^{\frac{1}{2-\gamma}} = (\lvert S_h \rvert + h - 1)\alpha(G)^{\frac{1}{2-\gamma}} = \lvert S'_h \rvert \alpha(G)^{\frac{1}{2-\gamma}}=\lvert S'_h \rvert \text{OPT}^{\frac{1}{2-\gamma}}.$$
  Therefore $\mathcal A^{+u}$ is an $\text{OPT}^{\frac{1}{2-\gamma}}$-approximation for \smis in $H^{+u}$-free graphs.
\end{proof}

\subsection{Locally easy graphs}

We say that a graph $H$ is \emph{\nice} if it has a universal vertex $v$ such that there is a polynomial-time algorithm for \smis in $H-\{v\}$-free graphs.   
Up to now, the three maximal graphs $H$ for which we know that \smis is polynomial-time solvable on $H$-free graphs are $P_6$ \cite{Grzesik17}, the fork \cite{Alekseev04,Lozin08}, and $tK_{1,3}$ (or $t$claw) \cite{BrandstadtM18a} (see Figure~\ref{fig:maximal-locally-easy} for the corresponding maximal \nice graphs).

\begin{figure}
  \begin{tikzpicture}
    \def\s{0.6}
    \foreach \i in {1,...,6}{
      \node[draw, circle] (a\i) at (\s * \i, 0) {};
    }
    \draw (a1) -- (a2) -- (a3) -- (a4) -- (a5) -- (a6) ;
    \node[draw, circle] (a) at (\s * 3.5, 1) {}; 
    \foreach \i in {1,...,6}{
      \draw (a) -- (a\i) ;
    }

    \begin{scope}[xshift=6.2cm]
      \node[draw, circle] (b1) at (0, \s) {};
      \node[draw, circle] (b2) at (0, 0) {};
      \node[draw, circle] (b3) at (0, -\s) {};
      \node[draw, circle] (b4) at (-0.5 * \s, 2 * \s) {};
      \node[draw, circle] (b5) at (0.5 * \s, 2 * \s) {};
      \draw (b1) -- (b2) -- (b3) ;
      \draw (b4) -- (b1) -- (b5) ;

      \node[draw,circle] (b) at (-2 * \s, 3 * \s) {} ;
    \foreach \i in {1,...,5}{
      \draw (b) -- (b\i) ;
    }
    \end{scope}

    \begin{scope}[xshift=7cm]
      \foreach \i in {1,2,4,5}{
        \node[draw, circle] (c\i1) at (2.2 * \s * \i, 0) {};
        \node[draw, circle] (c\i2) at (2.2 * \s * \i - 0.75 * \s, \s) {};
        \node[draw, circle] (c\i3) at (2.2 * \s * \i, \s) {};
        \node[draw, circle] (c\i4) at (2.2 * \s * \i + 0.75 * \s, \s) {};
        \draw (c\i1) -- (c\i2) ;
        \draw (c\i3) -- (c\i1) -- (c\i4) ;
      }
      \node[draw, circle] (c) at (6.6 * \s,2) {} ;
      \node at (6.6 * \s,\s) {$\ldots$} ;
      \foreach \i in {1,2,4,5}{
        \foreach \j in {1,...,4}{
          \draw (c) -- (c\i\j) ;
        }
      }
    \end{scope}
  \end{tikzpicture}
  \caption{The three maximal \nice graphs $H$ constructed from $P_6$, the fork, and $tK_{1,3}$, respectively.}
  \label{fig:maximal-locally-easy}
\end{figure}
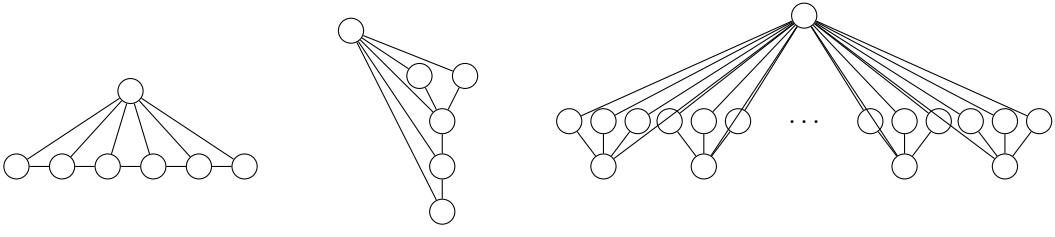

The following is an immediate consequence of \cref{lem:HtoHplusOPT}.
Therein we recall that the constant $\gamma$ may take value 0. 

\begin{theorem}\label{thm:sqrtOPT}
  For any \nice $H$, \mis can be $\sqrt \text{OPT}$-approximated on $H$-free graphs.
\end{theorem}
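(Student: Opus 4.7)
The plan is to recognize the statement as an essentially immediate corollary of \cref{lem:HtoHplusOPT}, applied to the special case where the starting approximation is in fact exact. First I would unpack the definition of \nice: by assumption there exists a universal vertex $v$ of $H$ such that \mis is polynomial-time solvable on $(H-\{v\})$-free graphs. Equivalently, setting $H' := H - \{v\}$, we have $H = H'^{+u}$, so any $H$-free graph is $H'^{+u}$-free and the hypothesis of \cref{lem:HtoHplusOPT} is set up naturally with $H'$ in the role of $H$.

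Next I would observe that an exact polynomial-time algorithm on $H'$-free graphs is trivially an $\text{OPT}^0$-approximation, since $\text{OPT}^0 = 1$. This is exactly the borderline case $\gamma = 0$ explicitly permitted by the hypothesis $0 \leqslant \gamma < 1$ of \cref{lem:HtoHplusOPT} (and emphasized in the sentence preceding the theorem). Invoking \cref{lem:HtoHplusOPT} with this choice of $\gamma$ then produces an $\text{OPT}^{\frac{1}{2-0}} = \text{OPT}^{1/2} = \sqrt{\text{OPT}}$-approximation algorithm for \mis on $H'^{+u}$-free graphs, which is the same as on $H$-free graphs. This is exactly the conclusion of the theorem.

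There is really no obstacle to overcome here; the content is entirely contained in \cref{lem:HtoHplusOPT}, and the only thing to verify is that the boundary value $\gamma = 0$ is indeed allowed. Instantiating the construction of the lemma with $\mathcal A$ equal to the polynomial-time exact algorithm on $(H-\{v\})$-free graphs yields the desired algorithm $\mathcal A^{+u}$ explicitly: it branches along a comb-shaped tree of at most $n$ leaves, at each step either discarding a vertex $v_i$ together with its closed neighborhood (and recursing) or exactly solving \mis on the open neighborhood of $v_i$ (which is $(H-\{v\})$-free because $G$ is $H$-free), and finally returning the best solution across all leaves.
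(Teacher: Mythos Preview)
Your proposal is correct and matches the paper's own argument: the theorem is stated as an immediate consequence of \cref{lem:HtoHplusOPT} with $\gamma=0$, exactly as you describe. There is nothing to add.
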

And in particular, there is a $\sqrt n$-approximation for triangle-free graphs.
\cref{lem:HtoHplusOPT} also yields the following approximation ratio in $K_{t+1}$-free graphs.

\begin{theorem}\label{thm:clique-free}
  For any $t \geqslant 1$, \mis can be $\text{OPT}^{1-\frac{1}{t}}$-approximated on $K_{t+1}$-free graphs.
\end{theorem}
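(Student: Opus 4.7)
The plan is to derive the theorem as a direct consequence of \cref{lem:HtoHplusOPT}, by induction on $t$, exploiting the simple structural fact that $K_{t+1}$ is obtained from $K_t$ by adding one universal vertex. In the notation of \cref{lem:HtoHplusOPT}, we have $K_{t+1} = K_t^{+u}$, which is exactly the kind of step that the lemma is tailored to handle.

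For the base case $t = 1$, a $K_2$-free graph has no edges, so its entire vertex set is a maximum independent set and \mis is solvable exactly in polynomial time. This yields the claimed ratio $\text{OPT}^{1 - 1/1} = \text{OPT}^{0} = 1$, and in particular it is an $\text{OPT}^{\gamma}$-approximation with $\gamma = 0$.

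For the inductive step, assume that for some $t \geqslant 2$ we already have an $\text{OPT}^{1 - 1/(t-1)}$-approximation for \mis on $K_t$-free graphs. Set $\gamma := 1 - \frac{1}{t-1} = \frac{t-2}{t-1}$, which satisfies $0 \leqslant \gamma < 1$, and apply \cref{lem:HtoHplusOPT} with $H = K_t$ (so that $H^{+u} = K_{t+1}$). The lemma produces an $\text{OPT}^{1/(2-\gamma)}$-approximation for \mis on $K_{t+1}$-free graphs. A short algebraic simplification gives
\[
\frac{1}{2 - \gamma} \;=\; \frac{1}{2 - \frac{t-2}{t-1}} \;=\; \frac{t-1}{2(t-1) - (t-2)} \;=\; \frac{t-1}{t} \;=\; 1 - \frac{1}{t},
\]
which is exactly the exponent claimed in the theorem.

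There is no real obstacle here: all the genuine algorithmic content is already packaged inside \cref{lem:HtoHplusOPT}, and the only thing to verify is that the recursion $\gamma \mapsto 1/(2 - \gamma)$, iterated $t - 1$ times from $\gamma_1 = 0$, yields the closed form $\gamma_t = 1 - 1/t$. This is the fixed-point behaviour of the Möbius-type map $x \mapsto 1/(2-x)$ converging to $1$, and the explicit formula is easily checked by induction as above. Consequently the proof consists essentially of invoking \cref{lem:HtoHplusOPT} inductively and performing this one-line computation.
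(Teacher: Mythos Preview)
Your proof is correct and follows essentially the same approach as the paper: induction on $t$, using the base case of edgeless graphs and the observation $K_{t+1} = K_t^{+u}$ to invoke \cref{lem:HtoHplusOPT}, together with the same algebraic simplification of $\frac{1}{2-\gamma}$. The only cosmetic difference is that the paper phrases the induction step as going from $K_{t+1}$-free to $K_{t+2}$-free, whereas you go from $K_t$-free to $K_{t+1}$-free.
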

\begin{proof}
  We show this statement by induction.
  The base case says that we can exactly solve in polynomial time \mis in edgeless graphs, which is obviously true.
  We assume that the statement is true for a fixed $t$.
  As $K_{t+2}=(K_{t+1})^{+u}$, \cref{lem:HtoHplusOPT} implies that \mis can be $\text{OPT}^{\frac{1}{2-(1-\frac{1}{t})}}$-approximated on $K_{t+2}$-free graphs.
  Furthermore, ${\frac{1}{2-(1-\frac{1}{t})}} = \frac{1}{1+\frac{1}{t}} = \frac{t}{t+1} = 1 - \frac{1}{t+1}$.
  Therefore we do obtain an $\text{OPT}^{1 - \frac{1}{t+1}}$-approximation on $K_{t+2}$-free graphs.
\end{proof}

We say that a graph $H$ is \emph{\tnice} if it has a set $U$ of $t$ universal vertices such that a polynomial algorithm is known for \smis in $H-U$-free graphs.
Informally, these graphs are obtained by replacing universal vertices of Figure~\ref{fig:maximal-locally-easy} by a $k$-clique.
The previous result readily generalizes from $K_{t+1}$-free to $H$-free graphs with $H$ \pnice{(t+1)}, with the same proof.
\begin{corollary}\label{thm:tpo-loc-easy}
Let $t$ be a non-negative integer. For any \pnice{t} $H$, \mis can be $\text{OPT}^{1-\frac{1}{t+1}}$-approximated on $H$-free graphs.
\end{corollary}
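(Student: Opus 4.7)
The plan is to prove Corollary~\ref{thm:tpo-loc-easy} by induction on $t$, mirroring exactly the proof of Theorem~\ref{thm:clique-free}, but with the base case supplied by the definition of \onice graphs rather than by the trivial case of edgeless graphs.

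\textbf{Base case ($t=0$).} If $H$ is $0$-locally easy then, by the definition of \pnice{t}, \mis admits a polynomial-time exact algorithm on $H$-free graphs. This gives an $\text{OPT}^{1-1/1} = \text{OPT}^{0} = 1$-approximation, as desired.

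\textbf{Inductive step.} Assume the statement holds for some fixed $t \geqslant 0$, and let $H$ be \pnice{(t+1)} with witness set $U \subseteq V(H)$, $|U|=t+1$, so that \mis is polynomial-time solvable on $(H-U)$-free graphs. Pick any $v \in U$ and let $H' := H - v$. The key observation is that $H'$ is \tnice: the set $U \setminus \{v\}$ consists of $t$ vertices that remain universal in $H'$, and $H' - (U \setminus \{v\}) = H - U$, which by hypothesis is a graph on whose induced-free class \mis is polynomial-time solvable. By the induction hypothesis, \mis admits a $\text{OPT}^{1-\frac{1}{t+1}}$-approximation on $H'$-free graphs. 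Since $v$ is universal in $H$, we have $H = (H')^{+u}$, so \cref{lem:HtoHplusOPT} applies with $\gamma = 1 - \frac{1}{t+1}$, and produces a $\text{OPT}^{\frac{1}{2-\gamma}}$-approximation on $H$-free graphs. A direct calculation gives
\[
\frac{1}{2-\gamma} \;=\; \frac{1}{2 - (1-\frac{1}{t+1})} \;=\; \frac{1}{1+\frac{1}{t+1}} \;=\; \frac{t+1}{t+2} \;=\; 1 - \frac{1}{t+2},
\]
which closes the induction.

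\textbf{Where the work sits.} There is no genuine obstacle beyond what has already been done: all the algorithmic content lives in \cref{lem:HtoHplusOPT}, and the inductive step only requires the small structural remark that removing one universal vertex of a \pnice{(t+1)} graph yields a \tnice graph whose ``easy'' class is unchanged (namely $(H-U)$-free graphs). The only thing to double-check is that the running time remains polynomial through the $t$ nested applications of \cref{lem:HtoHplusOPT}: each application preserves polynomial running time up to a constant multiplicative factor (as noted in the proof of that lemma), and $t$ is a fixed constant, so the final algorithm is polynomial.
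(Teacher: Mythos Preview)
Your proof is correct and follows precisely the approach the paper intends: the paper explicitly states that the corollary holds ``with the same proof'' as \cref{thm:clique-free}, namely induction on $t$ via \cref{lem:HtoHplusOPT}, which is exactly what you do. Your structural observation that removing one universal vertex from a \pnice{(t+1)} graph yields a \tnice graph (with the same easy class $(H-U)$-free) is the only additional ingredient needed, and it is immediate.
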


As we will see in \cref{sec:improving} for forbidden graphs $H$ containing a triangle, it is unlikely to improve the approximation ratio below $n^{1/4}$.
However if $H$ is a star, constant-approximations are achievable.
It is known that \smis is polynomial-time solvable on claw-free graphs \cite{Minty80} (i.e., $K_{1,3}$-free graphs) while it is APX-hard on $K_{1,4}$-free graphs (see for instance \cite{Alekseev82}).
The greedy algorithm (or actually any sensible algorithm) gives an $s$-approximation in $K_{1,s}$-free graphs.
The ratio was improved to arbitrarily close to $\frac{s-1}{2}$ by Halld\'orsson.
\begin{theorem}[\cite{Halldorsson95}]
  For every $s \geqslant 4$ and $\varepsilon > 0$, \mis is $\frac{s-1}{2}+\varepsilon$-approximable on $K_{1,s}$-free graphs.
\end{theorem}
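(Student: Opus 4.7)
The plan is a $t$-local search where $t = t(\varepsilon, s)$ is chosen sufficiently large. Start from any maximal independent set $I$ of the input graph $G$, and iterate the following $t$-improvement step: while there exists an independent set $A \subseteq V(G) \setminus I$ with $|A| \leqslant t$ such that, setting $B := N(A) \cap I$, one has $|B| < |A|$ and $A$ is anti-complete to $I \setminus B$, replace $I$ by $A \cup (I \setminus B)$. Since $|I|$ strictly increases at each step, at most $n$ iterations occur, each taking time $n^{O(t)}$, so the whole procedure is polynomial for fixed $\varepsilon$. The algorithm returns a $t$-locally optimal (in particular maximal) independent set $I$, and the analysis below will show that $|I^*| \leqslant \bigl(\tfrac{s-1}{2}+\varepsilon\bigr)|I|$ for any optimum $I^*$.

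The starting observation is that in a $K_{1,s}$-free graph, the independence number of every open neighborhood $G[N(v)]$ is at most $s-1$: an independent set of size $s$ in $N(v)$ together with $v$ would form an induced $K_{1,s}$. Consequently every vertex of $I$ has at most $s-1$ neighbors in $I^*$, and by maximality of $I$, every vertex of $I^* \setminus I$ has at least one neighbor in $I \setminus I^*$. The trivial counting already gives $|I^*| \leqslant (s-1)|I|$, matching the naive $s$-approximation; I will improve this to essentially $\tfrac{s-1}{2}$ by exploiting local optimality.

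To get the sharper bound, consider the bipartite ``conflict'' graph $H$ on vertex set $(I \setminus I^*) \cup (I^* \setminus I)$ whose edges are the pairs $uv \in E(G)$ with $u \in I \setminus I^*$ and $v \in I^* \setminus I$. I would partition $I \setminus I^*$ into the set $L$ of degree-$1$ vertices of $H$ and the set $M$ of degree-$\geqslant 2$ vertices. The vertices in $L$ contribute at most $|L|$ vertices to $I^* \setminus I$, well within ratio $1$; the slack comes from $M$, where in principle each vertex could dominate up to $s-1$ vertices of $I^* \setminus I$. However, $t$-local optimality implies that for every subset $S \subseteq M$ of size at most $t-1$, one has $|N_H(S)| \leqslant |S|$, for otherwise $A := N_H(S)$ together with $B := S$ would witness a $t$-improvement. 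Applying this inside each connected component of $H[M \cup N_H(M)]$ and a Hall-type/averaging charging — where small components force the ratio in each component down to $\tfrac{s-1}{2}$ and the ``boundary'' of large components contributes a term that vanishes as $t \to \infty$ — yields the claimed $\tfrac{s-1}{2}+\varepsilon$ bound.

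The main obstacle is precisely the component-by-component charging: one must show simultaneously that small components are strictly balanced by local optimality and that the losses incurred by boundary vertices of large components can be absorbed into the $\varepsilon$ term by taking $t$ sufficiently large as a function of $\varepsilon$ and $s$. Making the local structure of $H$ around each component sufficiently rigid — so that one can either extract an improving swap of size $\leqslant t$ or certify a favourable ratio — is the technical heart of Halld\'orsson's argument, and the dependency of $t$ on $\varepsilon$ comes out of this calibration.
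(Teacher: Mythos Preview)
The paper does not prove this theorem; it is simply quoted with a citation to Halld\'orsson and no argument is given. So there is no paper-proof to compare your sketch against.

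That said, your sketch has a concrete error in the way you invoke local optimality. You claim that for every $S \subseteq M \subseteq I \setminus I^*$ with $|S| \leqslant t-1$ one has $|N_H(S)| \leqslant |S|$, because otherwise $A := N_H(S)$ and $B := S$ would be an improving swap. This does not follow: in your own algorithm $B$ is \emph{defined} as $N(A) \cap I$, and vertices of $N_H(S) \subseteq I^* \setminus I$ can have neighbours in $I \setminus I^*$ outside $S$, so in general $N(A) \cap I \supsetneq S$ and no improvement is witnessed. A concrete counterexample: on the path $v_1 u_1 v_2 u_2 v_3$ with $I = \{u_1,u_2\}$ and $I^* = \{v_1,v_2,v_3\}$, the set $I$ is $2$-locally optimal, $u_1 \in M$, yet $|N_H(\{u_1\})| = 2 > 1$.

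What $t$-local optimality actually yields runs in the opposite direction: for every $A \subseteq I^* \setminus I$ with $|A| \leqslant t$ one has $|N_H(A)| \geqslant |A|$ (otherwise swap $A$ in for $N_H(A)$). Together with the degree bound $s-1$ on the $I \setminus I^*$ side, this is precisely the hypothesis of the Hurkens--Schrijver extremal lemma for bipartite graphs, which gives $|I^* \setminus I| \leqslant \bigl(\tfrac{s-1}{2}+o_t(1)\bigr)\,|I \setminus I^*|$. Your $L/M$ partition and the vague ``component-by-component charging'' are not the right mechanism and, as stated, rest on the false inequality above; the actual analysis is a direct extremal argument on bipartite graphs satisfying this one-sided small-set Hall condition.
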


\subsection{\textsc{Local Search} for $K_{t,t}$-free graphs}

Here we analyze the performance of the \textsc{$t$-Local Search} algorithm in $K_{t,t}$-free graphs.
Usually for the particular case of the biclique, ``$K_{t,t}$-free'' is intended as ``no $K_{t,t}$ as a subgraph''.
Here we still mean ``no $K_{t,t}$ as an \emph{induced} subgraph'', since our algorithm works even in this more general setting.
For a fixed integer $t > 2$, \textsc{$t$-Local Search} takes as input a graph $G$, and construct an independent set $S$ from a single vertex.
Then, it tries to improve $S$ in the following way: whenever there exist two sets $X \subseteq S$ (note that $X$ can possibly be empty) and $Y \subseteq V \setminus S$ such that $0 \leqslant |X| < |Y| \leqslant t$ and $(S \setminus X) \cup Y$ is an independent set, it replaces $S$ by $(S \setminus X) \cup Y$ (if there are several choices, it chooses an arbitrary one). When $S$ can no longer be improved, it outputs it. Each improvement takes $O(n^{2t})$ time, and the number of such improvements is at most $n$, since the size of $S$ increases by at least one at each step. Hence, the algorithm takes polynomial time. 
In the following theorem, we prove that this simple algorithm provides an $O(OPT^{1-1/t})$-approximation whenever the input graph is $K_{t,t}$-free.
In particular, \textsc{$2$-Local Search} is an $O(\sqrt{OPT})$-approximation in $C_4$-free graphs.
It came to our knowledge that the same result was obtained independently by Dvořák, Feldmann, Rai, and Rzążewski~\cite{privateCommunication}.

\begin{theorem}\label{thm:Ktt-free}
For any fixed $t \geqslant 2$, \textsc{$t$-Local Search} is an $O(OPT^{1-1/t})$-approximation in $K_{t, t}$-free graphs.
\end{theorem}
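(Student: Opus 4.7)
The plan is as follows. Let $S$ denote the independent set returned by $t$-Local Search on the $K_{t,t}$-free graph $G$, and let $S^\star$ be a maximum independent set, so $OPT = |S^\star|$. Set $A := S^\star \setminus S$ and $B := S$: both $A$ and $B$ are independent in $G$, and $|S^\star| \leqslant |A| + |B|$. The goal is to show $|A| = O(|B|^t)$, which yields $|S^\star| = O(|S|^t)$, hence $|S| = \Omega(OPT^{1/t})$ and therefore the ratio $|S^\star|/|S| = O(OPT^{1 - 1/t})$.

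First, I would translate the local optimality of $S$ into a Hall-type condition: for every $Y \subseteq A$ with $1 \leqslant |Y| \leqslant t$, the set $X := N(Y) \cap B$ satisfies $|X| \geqslant |Y|$. Indeed, $(S \setminus X) \cup Y$ is independent---$Y \subseteq S^\star$ is independent, $S \setminus X \subseteq S$ is independent, and by definition of $X$ no edges remain between $Y$ and $S \setminus X$---so $|X| < |Y|$ would constitute a valid $t$-Local Search improvement of $S$, contradicting its local optimality.

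Next, I would bound $|A|$ in two pieces. Let $A_{\mathrm{low}} := \{a \in A : |N(a) \cap B| \leqslant t - 1\}$ and $A_{\mathrm{high}} := A \setminus A_{\mathrm{low}}$. For $A_{\mathrm{low}}$, the Hall-type condition forbids $t$ distinct vertices of $A$ from sharing the same neighborhood $N \subseteq B$ of size at most $t - 1$ (otherwise the corresponding $Y$ of size $t$ would have $|N(Y) \cap B| \leqslant t - 1 < t$), so each of the $O(|B|^{t-1})$ candidate small neighborhoods is shared by at most $t - 1$ vertices, giving $|A_{\mathrm{low}}| = O(|B|^{t-1})$. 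For $A_{\mathrm{high}}$, since $G$ is $K_{t,t}$-free and both $A$ and $B$ are independent in $G$, the bipartite subgraph between $A$ and $B$ contains no $K_{t,t}$ as a subgraph, so any $t$-subset of $B$ admits at most $t - 1$ common neighbors in $A$. Double-counting the pairs $(a, T)$ with $a \in A$ and $T$ a $t$-subset of $N(a) \cap B$ then yields
\[
\sum_{a \in A} \binom{|N(a) \cap B|}{t} \leqslant (t - 1) \binom{|B|}{t},
\]
and since each $a \in A_{\mathrm{high}}$ contributes at least $1$ to the left-hand side, $|A_{\mathrm{high}}| = O(|B|^t)$.

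The only real subtlety I anticipate is the Hall-type step: one must check carefully that $X = N(Y) \cap S$ is the correct set to remove in order for $(S \setminus X) \cup Y$ to be independent, and that $|X| < |Y|$ is exactly the inequality that $t$-Local Search would act on. Everything else combines cleanly: $|A| = |A_{\mathrm{low}}| + |A_{\mathrm{high}}| = O(|B|^t)$, hence $|S^\star| = O(|S|^t)$, and the approximation ratio $O(OPT^{1-1/t})$ follows.
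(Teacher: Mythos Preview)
Your proposal is correct and follows essentially the same approach as the paper: split $A=S^\star\setminus S$ by degree into $B=S$, bound the low-degree part via the local-optimality (Hall-type) condition on neighborhood classes, and bound the high-degree part via the $K_{t,t}$-freeness on $t$-subsets of $B$. The only cosmetic differences are that the paper works with $S'=S\setminus O$ rather than all of $S$, and phrases the high-degree bound as a union over $t$-subsets rather than a double count; neither affects the argument.
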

\begin{proof}
Let $S$ be the solution returned by the algorithm, and $O$ be a fixed optimal solution. The objective is to bound $|O'|$ in terms of $|S'|$, where $O' := O \setminus S$ and $S' := S \setminus O$. To this end, let us consider $\mathcal{B} := G[S' \cup O']$ the bipartite graph induced by $S' \cup O'$. Let $k := |S'|$.
We partition $O'$ into $D^-$ and $D^+$, where $D^-$ are the vertices of $O'$ whose degree within $S'$ is at most $t-1$, and thus $D^+$ are the vertices of $O'$ whose degree within $S'$ is at least $t$. We now bound the sizes of $D^-$ and $D^+$ separately.
\begin{itemize}
	\item Let us partition $D^-$ into classes $D^-_1$, $\dots$, $D^-_q$ with respect to the equivalence relation $u \sim v$ if and only if $N_{S'}(u) = N_{S'}(v)$. By definition of $D^-$ we have $q \leqslant \sum_{i=1}^{t-1}{k \choose i}$. Then, we claim that for every $i \in \{1, \dots, q\}$, we have $|D^-_i| \leqslant t-1$. Indeed, we must have $|D^-_i| \leqslant |N_{S'}(D^-_i)|$, since otherwise the algorithm would have replaced $S$ by $(S \setminus N_{S'}(D^-_i)) \cup D^-_i$. This proves $|D^-| \leqslant (t-1) \sum_{i=1}^{t-1}{k \choose i}$.\\
	
	\item For a set $X \subseteq S'$, let $I_X := \bigcap_{x \in X} N_{O'}(x)$. Observe that if $|X| = t$, then necessarily $I_X \subseteq D^+$, and moreover $|I_X| \leqslant t-1$, since otherwise the graph would have an induced $K_{t, t}$. Finally, we have $D^+ = \bigcup_{X \subseteq S', |X|=t} I_X$, which proves that $|D^+| \leqslant {k \choose t}(t-1)$.	
\end{itemize}
Hence we have $|O'| \leqslant |D^-| + |D^+| \leqslant (t-1) \sum_{i=1}^{t-1}{k \choose i} + {k \choose t}(t-1) = O(k^t)$.
\end{proof}

\section{Graphs without short cycles} \label{sec:negative}

In this section we show that the strong inapproximability of \mis in general graphs survives, albeit in a less severe form, on graphs without small cycles.
More quantitatively, we show that for any positive integer $\gamma$, there is a constant $\beta = \Theta(1/\gamma)$ depending only on $\gamma$, such that an $n^\beta$-approximation of \mis in graphs with girth $\gamma$ is unlikely.

\subsection{Triangle-free graphs}\label{sec:triangle-free}

While~\cref{lem:HtoHplusOPT} implies an $n^{1/2}$-approximation of \mis in triangle-free graphs, a natural question is how much the ratio's exponent can be decreased.
In this section we provide a lower bound for it.

The following result will be made obsolete twice.
Indeed we will then generalize its statement from triangle-free, that is girth 4, to graphs with any constant girth.
Then in \cref{sec:improving} we will present a stronger inapproximability result of $\Omega(n^{1/4-\varepsilon})$.
Nevertheless we choose to keep its proof as it is simpler, easier to follow, and self-contained.
Furthermore, it contains all the ideas necessary to achieve the subsequent results. 

\begin{theorem}\label{thm:triangle-free}
	For any $\varepsilon > 0$, it is NP-hard to distinguish between triangle-free graphs $G$ on $n$ vertices satisfying
	\begin{itemize}
		\item $\alpha(G) \leqslant n^{5/6 - \varepsilon}$, and
		\item $\alpha(G) \geqslant n^{1 - \varepsilon}$.
	\end{itemize}
	So for any $\varepsilon > 0$, \mis cannot be approximated within ratio $n^{1/6 - \varepsilon}$ in triangle-free graphs unless NP~$\subseteq$~BPP.
\end{theorem}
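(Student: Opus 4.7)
The plan is to reduce from the celebrated strong inapproximability of \mis in general graphs (H\aa{}stad, Zuckerman): for every $\delta > 0$, it is NP-hard (under randomized reductions) to distinguish $n$-vertex graphs $G$ with $\alpha(G) \geqslant n^{1-\delta}$ from those with $\alpha(G) \leqslant n^{\delta}$. Given $\varepsilon > 0$, I fix $\delta > 0$ small compared to $\varepsilon$ and design a polynomial-time randomized reduction sending such a $G$ to a triangle-free graph $G'$ on $N \approx n^{6}$ vertices that preserves the promise gap, from which the claimed $n^{1/6-\varepsilon}$ inapproximability follows at once.

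\textbf{Construction.} Choose parameters $m \approx n^{5}$ and $p \approx n^{-4}$. Replace each $u \in V(G)$ by an edgeless ``blob'' $B_u$ of $m$ new vertices, and for every edge $uv \in E(G)$ include each of the $m^{2}$ potential cross-edges between $B_u$ and $B_v$ independently at random with probability $p$. Blobs are internally edgeless and each cross-blob graph is bipartite, so the only triangles in the result come from triples of blobs $(B_u, B_v, B_w)$ whose indices form a triangle in $G$; their expected number is at most $n^{3} m^{3} p^{3} = o(nm)$, so a clean-up step deleting one vertex per triangle produces a genuinely triangle-free graph $G'$ on $N = (1-o(1)) nm$ vertices.

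\textbf{Analysis.} Completeness is immediate: any independent set $I \subseteq V(G)$ of size $n^{1-\delta}$ lifts to the independent set $\bigcup_{u \in I} B_u$ of $G'$, giving $\alpha(G') \geqslant n^{1-\delta} m \geqslant N^{1-\varepsilon}$ for $\delta$ small. Soundness is the heart of the argument. Given an independent set $I'$ of $G'$, set $a_u := |I' \cap B_u|$ and call $u$ \emph{heavy} if $a_u \geqslant t$, where $t$ is a threshold of order $n^{4} \log n$. The key probabilistic lemma bounds the probability, in a random bipartite graph $G(m,m,p)$, that there exists a pair of cross-independent sets of size $t$ on each side by $\binom{m}{t}^{2}(1-p)^{t^{2}} \leqslant \exp(2 t \ln m - p t^{2})$, which for our parameters sits well below $n^{-2}$. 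A union bound over the $\leqslant n^{2}$ cross-blob random bipartite pieces then guarantees w.h.p.\ that no edge of $G$ has both endpoints heavy; hence the heavy vertices form an independent set in $G$, of size at most $\alpha(G) \leqslant n^{\delta}$ in the NO case, and contribute at most $n^{\delta} m$ to $|I'|$; the non-heavy blobs contribute at most $n t$. Both terms are $O(n^{5+\delta}\log n) = N^{5/6 + O(\delta)}$, producing a gap ratio of $N^{1/6 - O(\delta)} \geqslant N^{1/6-\varepsilon}$ once $\delta$ is chosen small enough in terms of $\varepsilon$.

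\textbf{The hard part.} The main obstacle is calibrating the two opposing pressures on $p$: it must be small enough that the expected triangle count is sublinear in $N$ (forcing roughly $p \ll (nm)^{-2/3}$), yet large enough that the random bipartite gadgets w.h.p.\ admit no cross-independent pair of size $t$ (forcing $t \gtrsim \log m / p$). The sweet spot $m \approx n^{5}$, $p \approx n^{-4}$ is exactly what produces the exponent $5/6$, and the same template, with random $k$-partite gadgets replacing the bipartite ones to kill all short odd cycles, is precisely what extends the argument to forbid cycles of length up to any fixed $\gamma$ in the girth generalization that follows.
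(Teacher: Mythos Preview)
Your approach is essentially identical to the paper's: the same blow-up of each vertex into an independent ``blob'', random bipartite gadgets on the edges, triangle clean-up, and the heavy/light threshold argument for soundness. The one place to tighten is the choice of $p$: with $m=n^{5}$ and $p=n^{-4}$ \emph{exactly}, the expected triangle count $n^{3}m^{3}p^{3}$ equals $n^{6}=nm$, not $o(nm)$; the paper fixes this by taking $p=n^{-4-\eta}$ for a small $\eta>0$ (and threshold $t=n^{4+2\eta}$ rather than $n^{4}\log n$), so that the expected triangle count is $n^{6-3\eta}$ and Markov's inequality gives $o(nm)$ deleted vertices with high probability---your ``$\approx$'' and your closing paragraph on calibrating $p$ suggest you already had this slack in mind, so just make it explicit.
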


\begin{proof}
	Let $\varepsilon > 0$ be an arbitrarily small real value, and $\varepsilon := 3 \varepsilon$.
	We perform a randomized reduction from an infinite set of graphs $H$ admitting the following gap:
	Positive instances have stable sets of size at least $|V(H)|^{1-\varepsilon}$ whereas negative instances have no stable set of size $|V(H)|^\varepsilon$.
	It is known that distinguishing between these two cases is NP-hard for randomized reductions \cite{Hastad96}, and even for deterministic ones \cite{Zuckerman07}.
        
	\medskip
	
	\textbf{Reduction.}
	Given an $N$-vertex graph $H$, we construct a triangle-free graph $G$ in the following way.
	We transform every vertex $v$ of $H$ into an independent set $I(v)$ of size $s := N^5$.
	For every edge $uv \in E(H)$, we put a random bipartite graph between $I(u)$ and $I(v)$: for each pair of vertices $x \in I(u)$, $y \in I(v)$, we independently add an edge $xy$ to $E(G)$ with probability $p := N^{-4-\eta}$ with $\eta := 2\varepsilon / 3$.
        We denote by $G_\triangle$ the graph thus obtained.
        A key property is that $G_\triangle$ contains only \emph{few} triangles.
	For each triangle in $G_\triangle$, we remove all three vertices of it.
	We call that phase the \emph{triangle removal}, and we denote by $G$ the triangle-free graph that arises when that phase comes to an end.
        We further assume that $N$ is larger than the smallest integral constant $N_0$ for which for every $N \geqslant N_0$, $N^{3\eta} > N^{2.5\eta} + 10 N^{2\eta} \ln N$, $N^{-\varepsilon} > 6N^{-2\eta}$, $2^{17/6}N^{\eta/3} < N^\varepsilon$, and $N^{-\eta} < 10^{-100}$.
        In particular the second and third inequalities hold for sufficiently large $N$ since $\eta < \varepsilon = 3 \varepsilon < 2\eta$.
        The hardness of approximation \cite{Hastad96,Zuckerman07} still holds since instances with less than a constant number of vertices can be solved optimally in constant time.  
	
	\begin{lemma}\label{lem:edgeSimulator}
	  For every edge $uv \in E(H)$, the probability that there exist two sets $A \subset I(u), B \subset I(v)$ both of size $N^{4+2\eta}$ without any edge between $A$ and $B$ is at most $e^{-N^{4+2.5\eta}}$.
          Thus, with high probability, this event does not happen.
	\end{lemma}
	\begin{proof}
	  The probability that there is no edge between two fixed sets $A$ and $B$ of size $N^{4+2\eta}$ is: 
	  $$(1-p)^{|A| \cdot |B|}=(1-\frac{1}{N^{4+\eta}})^{N^{2(4+2\eta)}} \leqslant e^{-N^{4+3\eta}}.$$
          By the union bound, the probability that there is at least one such pair of sets is at most:
          $${N^5 \choose N^{4+2\eta}}^2 e^{-N^{4+3\eta}} \leqslant N^{10 N^{4+2\eta}} e^{-N^{4+3\eta}} = e^{-N^{4+3\eta}+10 N^{4+2\eta} \ln N} \leqslant e^{-N^{4+2.5\eta}}.$$
\end{proof}
	
	\begin{lemma}\label{lem:fewTriangles}
	  The expected number of triangles in $G_\triangle$ is at most $N^{6-3\eta}$.
          Furthermore $|V(G_\triangle)|-|V(G)|$ is at most $3N^{6-2\eta}$ with probability at least $1 - N^{-\eta}$.
	\end{lemma}
	\begin{proof}
	  The expected number of triangles in $G_\triangle$ is:
	  $$\mathbb E(\#(\triangle,G_\triangle)) \leqslant (sN)^3p^3 = (N^6)^3N^{-12-3\eta} = N^{6-3\eta}.$$
          By Markov's inequality, $\mathbb P(\#(\triangle,G_\triangle) \geqslant N^{6-2\eta}) \leqslant N^{6-3\eta}/N^{6-2\eta} = N^{-\eta}$.
	\end{proof}
	
	Let $n$ be the number of vertices of $G$ (after the triangle removal).
        By the previous lemma $n := |V(G)| > N^6/2$, with high probability.
	
	\medskip
	
	\textbf{If $H$ is a YES-instance, there is a stable set of size $\mathbf{n^{1-\varepsilon}}$ in $G$.}
	We assume that $H$ is a YES-instance, so there is a stable set $S$ in $H$ such that $|S| \geqslant N^{1-\varepsilon}$.
	By construction, $S_{G_\triangle} := \bigcup_{u \in S} I(u)$ is a stable set in $G_\triangle$ of size $s|S| \geqslant N^{6-\varepsilon}$.
        By Lemma~\ref{lem:fewTriangles}, $S_{G_\triangle} \cap V(G)$ is an independent set in $G$ of size, w.h.p., at least $N^{6-\varepsilon} - 3N^{6-2\eta} > N^{6-\varepsilon}/2 > n^{1-\varepsilon/6}/2 = n^{1 - \varepsilon/2} /2 > n^{1 - \varepsilon}$.
	
	\medskip
	
	\textbf{If H is a NO-instance, there is no stable set of size $\mathbf{n^{5/6+\varepsilon}}$ in G.} 
        Let $S_G$ be an independent set of $G$ and let $S := \{v \in V(H)~\text{such that}~|I(v) \cap S_G| \geqslant N^{4+2\eta}\}$.
        If $H$ is a NO-instance, then there is no stable set in $H$ of size more than $N^\varepsilon$.
        By a union bound of applications of Lemma~\ref{lem:edgeSimulator} to all pairs of vertices of $S$, w.h.p $S$ is an independent set of $H$, which implies that $|S|<N^{\varepsilon}$.
        Thus $|S_G| < sN^{\varepsilon}+N^{4+2\eta}(N-N^{\varepsilon}) < N^{5+\varepsilon}+N^{5+2\eta} < 2N^{5+2\eta} = 2N^{5(1+2\eta/5)} < 2^{11/6}n^{5/6 + \eta / 3} < n^{5/6+\varepsilon}$.
\end{proof}

\subsection{Graphs with higher girth}\label{sec:high-girth}

Monien, Speckenmeyer and Murphy independently found improved approximations when the girth, actually even the odd girth, is any constant $\gamma$.

\begin{theorem}[\cite{Monien85,Murphy92}]\label{thm:MonienMurphy}
  \mis admits a polynomial-time $n^{\frac{2}{\gamma - 1}}$-approximation on graphs with odd girth $\gamma$.
\end{theorem}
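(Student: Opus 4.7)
The argument would hinge on the following structural lemma: in a graph $G$ of odd girth $\gamma = 2k+1$ and for any vertex $v$, each BFS level $L_i(v) := \{u \in V(G) : d_G(v, u) = i\}$ is an independent set whenever $0 \leqslant i \leqslant k-1$. The proof is direct: an edge $uu' \in G[L_i(v)]$ combined with the BFS paths of length $i$ from $v$ to $u$ and from $u'$ back to $v$ produces an odd closed walk of length $2i+1 \leqslant 2k-1 < \gamma$, which contains an odd cycle of length at most $2i+1 < \gamma$, contradicting the odd-girth hypothesis.

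Using this lemma, I would design a recursive algorithm targeting the ratio $\rho = n^{2/(\gamma-1)} = n^{1/k}$. On input $G$, the algorithm examines the balls $B_{k-1}(v) := L_0(v) \cup \cdots \cup L_{k-1}(v)$ and splits into two cases governed by a threshold $T$. In the \emph{dense} regime, some vertex $v$ satisfies $|B_{k-1}(v)| \geqslant T$, and the algorithm returns the largest of the $k$ independent layers $L_0(v), \ldots, L_{k-1}(v)$, whose size is at least $T/k$. In the \emph{sparse} regime, every ball is small ($|B_{k-1}(v)| < T$ for all $v$); the algorithm picks an arbitrary $v$, outputs $\{v\}$ together with a recursive call on $G \setminus B_{k-1}(v)$ (the latter inherits odd girth $\geqslant \gamma$, and every vertex in it is at distance $\geqslant k$ from $v$, hence non-adjacent to $v$, so the union stays independent).

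For the analysis, the key is to compare the output against $\alpha(G)$ rather than against $n$. Any optimum independent set $O$ restricts to at most $|B_{k-1}(v_j)|$ vertices in each peeled ball $B_{k-1}(v_j)$, of which a $1/k$-fraction is already captured by the largest independent layer within that ball; the remainder of $O$ is accounted for inductively in the recursive call. Choosing the threshold $T$ to balance the dense-regime contribution $T/k$ against the sparse-regime cumulative contribution of order $n/T$ then closes the ratio at $n^{1/k} = n^{2/(\gamma-1)}$.

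The main obstacle is precisely this balancing for $k \geqslant 3$. For triangle-free graphs ($k=2$), a threshold $T \sim \sqrt{n}$ makes the two regimes match trivially (as in \cref{thm:triangle-free}). For higher odd girth one either needs to invoke a deeper recursion inside the independent layer $L_{k-1}(v)$, which itself has odd girth $\geqslant \gamma$, or to argue that in the pure sparse regime the independence number $\alpha(G)$ is automatically bounded by $O(n^{2/k})$ so that the cumulative sparse contribution of $n^{1/k}$ vertices already meets the $n^{1/k}$-approximation guarantee. Getting the exponent exactly at $2/(\gamma-1)$ rather than a weaker bound is the delicate point of the argument, and it is exactly where the $k$-layer partition of $B_{k-1}(v)$ — rather than a single independent level — must be exploited.
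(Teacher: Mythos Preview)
The paper does not supply a proof of this theorem; it is quoted from \cite{Monien85,Murphy92} as background for the lower bounds of \cref{sec:high-girth}. So there is no in-paper argument to compare against, and I can only assess your sketch on its own merits.

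Your structural lemma---that each BFS layer $L_i(v)$, $0 \le i \le k-1$, is independent when the odd girth is at least $2k+1$---is correct and is indeed the combinatorial fact underlying the cited results. However, the dense/sparse scheme you describe only recovers the triangle-free ratio $O(\sqrt n)$ regardless of $k$: taking the best of the largest layer over all stages and the greedy set $\{v_1,\dots,v_p\}$ yields an independent set of size at least $\max\bigl(\alpha/(kp),\,p\bigr) \ge \sqrt{\alpha/k}$, hence ratio $O(\sqrt{kn})$, not $n^{1/k}$. You flag this gap yourself.

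Neither of your suggested patches closes it. Fix~(2) is simply false: bounding every $|B_{k-1}(v)|$ from above gives no upper bound on $\alpha(G)$---an edgeless graph has all balls of size~$1$ yet $\alpha = n$. Fix~(1) is vacuous as written, since $L_{k-1}(v)$ is already edgeless and ``recursing'' on it just returns the set itself. The actual Monien--Speckenmeyer/Murphy argument requires a genuine induction on $k$ (equivalently, on the odd-girth parameter), not merely the one-shot layer count you use; getting the exponent from $1/2$ down to $1/k$ is exactly the content of those papers, and your proposal stops before that step.
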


In particular, the result implies an $n^{1/2}$-approximation for triangle-free graphs, an $n^{1/3}$-approximation for $\{C_3,C_5\}$-free graphs, an $n^{1/4}$-approximation for $\{C_3,C_5,C_7\}$-free graphs, etc.
On the complexity side, the construction of Theorem~\ref{thm:triangle-free} where the probability $p$ of having an edge between $I(u)$ and $I(v)$ with $uv \in E(H)$ is now set to $N^{-2(\gamma-1)-\eta}$ and the size $s$ of each $I(u)$ is set to $N^{2\gamma-1}$ yields a polynomial gap on $C_{\gamma}$-free graphs, and even on graphs with girth $\gamma+1$.

\begin{theorem}\label{thm:high-girth}
	For any $\varepsilon > 0$, it is NP-hard to distinguish between graphs $G$ with $n$ vertices and girth $\gamma+1$ satisfying
	\begin{itemize}
		\item $\alpha(G) \leqslant n^{\frac{2\gamma-1}{2\gamma} - \varepsilon}$, and
		\item $\alpha(G) \geqslant n^{1 - \varepsilon}$.
	\end{itemize}
	Hence, for any $\varepsilon > 0$, \mis cannot be approximated within ratio $n^{\frac{1}{2\gamma} - \varepsilon}$ in graphs with girth $\gamma + 1$ unless NP $\subseteq$ BPP.
\end{theorem}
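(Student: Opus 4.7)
The plan is to follow the blueprint of \cref{thm:triangle-free} almost verbatim, updating the two parameters to $s := N^{2\gamma-1}$ and $p := N^{-2(\gamma-1)-\eta}$ as already suggested in the paper, and replacing the triangle-removal phase by the removal of every cycle of length at most $\gamma$. Concretely, starting from an $N$-vertex instance $H$ with the Hastad--Zuckerman gap (stable set of size $N^{1-\varepsilon}$ versus none of size $N^\varepsilon$), I would blow up each vertex $v$ of $H$ into an independent set $I(v)$ of size $s$, place an independent random bipartite graph with edge probability $p$ between $I(u)$ and $I(v)$ for every $uv \in E(H)$, call the resulting graph $G_0$, and then greedily delete all cycles of length at most $\gamma$ to obtain a graph $G$ of girth at least $\gamma+1$.

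The first new ingredient, relative to the triangle-free case, is a short-cycle count. For each $k \in \{3, \ldots, \gamma\}$, the expected number of $k$-cycles in $G_0$ is at most (the number of closed $k$-walks in $H$) $\cdot\, s^k \cdot p^k$, which with the chosen parameters equals $O(N^{k(1 + (2\gamma-1) - 2(\gamma-1) - \eta)}) = O(N^{k(2-\eta)})$. This is maximized at $k=\gamma$ and equals $O(N^{2\gamma - \gamma\eta})$, a factor $N^{-\gamma\eta}$ smaller than the total vertex count $n \approx sN = N^{2\gamma}$. A Markov argument then shows that \whp the deletion phase removes only a negligible fraction of the vertices, so $|V(G)| \geqslant N^{2\gamma}/2$ \whp

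The second step is an analogue of \cref{lem:edgeSimulator}: for every $uv \in E(H)$ and all pairs $A \subseteq I(u)$, $B \subseteq I(v)$ of size $N^{2(\gamma-1)+2\eta}$, at least one $AB$-edge appears in $G_0$ \whp The computation is identical to the one in the paper: the no-edge probability is at most $e^{-p|A||B|} = e^{-N^{2(\gamma-1)+3\eta}}$, which dominates the $\binom{s}{|A|}^2 = e^{O(N^{2(\gamma-1)+2\eta}\log N)}$ term in the union bound as soon as $\eta > 0$.

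The completeness and soundness arguments then transport without surprise. Completeness: blowing up an independent set of $H$ of size $N^{1-\varepsilon}$ yields $sN^{1-\varepsilon} = N^{2\gamma-\varepsilon}$ independent vertices in $G_0$, and only $o(N^{2\gamma-\varepsilon})$ are lost in the cleanup, leaving an independent set of size at least $n^{1-\varepsilon'}$ for $\varepsilon' = O(\varepsilon)$. Soundness: any independent set $S_G$ of $G$ forces, by the edge-simulator lemma, the set $\{v \in V(H) : |I(v) \cap S_G| \geqslant N^{2(\gamma-1)+2\eta}\}$ to be independent in $H$, hence of size $< N^\varepsilon$, and then $|S_G| \leqslant sN^\varepsilon + N \cdot N^{2(\gamma-1)+2\eta} = O(N^{2\gamma-1+\max(\varepsilon,2\eta)}) \leqslant n^{(2\gamma-1)/(2\gamma)+\varepsilon}$ once $\eta$ is chosen sufficiently small in terms of $\varepsilon$. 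The only delicate step is the cycle-counting one: since the expected number of $k$-cycles grows with $k$, the $\gamma$-cycles drive the cleanup, and the scaling $s = N^{2\gamma-1}$, $p = N^{-2(\gamma-1)-\eta}$ is calibrated precisely so that this dominant contribution is still absorbed by the single factor $N^{-\gamma\eta}$ while the edge simulator lemma continues to succeed.
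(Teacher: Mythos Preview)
Your proposal is correct and follows essentially the same route as the paper's own proof: identical parameter choices $s=N^{2\gamma-1}$, $p=N^{-2(\gamma-1)-\eta}$, the same short-cycle removal step, the same edge-simulator lemma with threshold $N^{2(\gamma-1)+2\eta}$, and the same completeness/soundness bookkeeping. The paper additionally pins down the concrete value $\eta=\frac{\gamma-1}{\gamma}\varepsilon$, but your ``choose $\eta$ sufficiently small in terms of $\varepsilon$'' is equivalent for the purpose of the argument.
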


\begin{proof}
  We do the same reduction as in Theorem~\ref{thm:triangle-free} with the following modifications.
  We now set $\varepsilon := \gamma \varepsilon$, $s := N^{2\gamma-1}$, $p := N^{2(\gamma-1)-\eta}$, and $\eta := \frac{\gamma-1}{\gamma} \varepsilon$.
  We denote by $G_\circ$ the graph obtained before the removal step.
  For every cycle of length at most $\gamma$, we remove all the vertices of the cycle from the graph.
  When this short cycle removal ends, the graph has girth at least $\gamma + 1$.
  We call $G$ the obtained graph.
  
\begin{lemma}\label{lem:edgeSimulatorGen}
	  For every edge $uv \in E(H)$, the probability that there exist two sets $A \subset I(u), B \subset I(v)$ both of size $N^{2(\gamma-1)+2\eta}$ without any edge between $A$ and $B$ is at most $e^{-N^{2(\gamma-1)+2.5\eta}}$.
          Thus, with high probability, this event does not happen.
\end{lemma}
	
	\begin{proof}
	  The probability that there is no edge between two fixed sets $A$ and $B$ of size $N^{2(\gamma-1)+2\eta}$ is: 
	  $$(1-p)^{|A| \cdot |B|}=(1-\frac{1}{N^{2(\gamma-1)+\eta}})^{N^{2(2(\gamma-1)+2\eta)}} \leqslant e^{-N^{2(\gamma-1)+3\eta}}.$$
          By the union bound, the probability that there is at least one such pair of sets is at most:
          $${N^{2 \gamma - 1} \choose N^{2(\gamma-1)+2\eta}}^2 e^{-N^{2(\gamma-1)+3\eta}} \leqslant N^{10 N^{2(\gamma-1)+2\eta}} e^{-N^{2(\gamma-1)+3\eta}}$$
          $$= e^{-N^{2(\gamma-1)+3\eta}+10 N^{2(\gamma-1)+2\eta} \ln N} \leqslant e^{-N^{2(\gamma-1)+2.5\eta}}.$$
	\end{proof}
	
	\begin{lemma}\label{lem:fewShortCycles}
	  The expected number of cycles of length at most $\gamma$ in $G_\circ$ is at most $N^{(2-\eta)\gamma}$.
          Furthermore $|V(G_\circ)|-|V(G)|$ is at most $\gamma N^{2\gamma-\eta(\gamma-1)}$ with probability at least $1 - N^{-\eta}$.
	\end{lemma}
	
	\begin{proof}
	  The expected number of cycles of length at most $\gamma$ in $G_\circ$ is:
	  $$\mathbb E(\#(C_{3 \rightarrow \gamma},G_\circ)) \leqslant \gamma(sN)^\gamma p^\gamma = \gamma N^{2 \gamma^2} N^{(-2(\gamma-1)-\eta)\gamma} = \gamma N^{(2-\eta)\gamma}.$$
          By Markov's inequality, $\mathbb P(\#(C_{3 \rightarrow \gamma},G_\circ) \geqslant \gamma N^{2\gamma-\eta(\gamma-1)}) \leqslant N^{(2-\eta)\gamma}/N^{2\gamma-\eta(\gamma-1)} = N^{-\eta}$.
	\end{proof}
	
	Let $n$ be the number of vertices of $G$ (after the short cycle removal).
        By the previous lemma $n := |V(G)| > N^{2 \gamma}/2$, with high probability.
	
	\medskip
	
	\textbf{If $H$ is a YES-instance, there is a stable set of size $\mathbf{n^{1-\varepsilon}}$ in $G$.}
	We assume that $H$ is a YES-instance, so there is a stable set $S$ in $H$ such that $|S| \geqslant N^{1-\varepsilon}$.
	By construction, $S_{G_\circ} := \bigcup_{u \in S} I(u)$ is a stable set in $G_\circ$ of size $s|S| \geqslant N^{2\gamma-\varepsilon}$.
        By~\cref{lem:fewShortCycles}, $S_{G_\circ} \cap V(G)$ is an independent set in $G$ of size, w.h.p., at least $N^{2\gamma-\varepsilon} - \gamma N^{2\gamma-\eta(\gamma-1)} > N^{2\gamma-\varepsilon}/2 > n^{1-\varepsilon/{2\gamma})}/2 = n^{1 - \varepsilon/2} /2 > n^{1 - \varepsilon}$.
	
	\medskip
	
	\textbf{If H is a NO-instance, there is no stable set of size $\mathbf{n^{(2\gamma-1)/(2\gamma)+\varepsilon}}$ in G.} 
        Let $S_G$ be an independent set of $G$ and let $S := \{v \in V(H)~\text{such that}~|I(v) \cap S_G| \geqslant N^{2(\gamma-1)+2\eta}\}$.
        If $H$ is a NO-instance, then there is no stable set in $H$ of size more than $N^\varepsilon$.
        By a union bound of applications of~\cref{lem:edgeSimulatorGen} to all pairs of vertices of $S$, w.h.p $S$ is an independent set of $H$, which implies that $|S|<N^{\varepsilon}$.
        Thus $|S_G| < sN^{\varepsilon}+N^{2(\gamma-1)+2\eta}(N-N^{\varepsilon}) < N^{2\gamma-1+\varepsilon}+N^{2\gamma-1+2\eta} < 2N^{2\gamma-1+2\eta} = 2N^{(2 \gamma -1)(1+2\eta/(2 \gamma -1))} < 2^{(4\gamma-1)/(2\gamma)}n^{(2\gamma-1)/(2\gamma) + \eta / \gamma} < n^{(2\gamma-1)/(2\gamma)+\varepsilon}$.
\end{proof}

Let us note that there is still a 4-fold multiplicative factor in the exponent between the approximation ratios of Theorem~\ref{thm:MonienMurphy}, namely $n^{2/\gamma}$, and the hardness ratios of $n^{1/(2\gamma) - o(1)}$ in Theorem~\ref{thm:high-girth}.
It is an interesting open question to bridge this gap.

An even hole is an induced cycle of even length at least 4.
Even-hole-free graphs are $\{C_4,C_6,C_8,\ldots \}$-free graphs.
The computational complexity of \smis on even-hole-free graphs is still unknown.
An FPT algorithm was established recently \cite{Husic19}.
We observe that Local Search readily gives a PTAS for that problem.
We leave the existence of an EPTAS as an open problem.
\begin{observation}\label{thm:even-hole-free}
  \mis can be $(1+\varepsilon)$-approximated in time $n^{O(1/\varepsilon)}$ on even-hole-free graphs.
\end{observation}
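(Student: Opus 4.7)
The plan is to run the standard $k$-local search for \smis with $k = \Theta(1/\varepsilon)$: starting from any independent set $S$, while there exist $X \subseteq S$ and $Y \subseteq V(G) \setminus S$ with $|X| \leqslant k$, $|Y| \leqslant k+1$, $|Y| > |X|$, and $(S \setminus X) \cup Y$ independent, replace $S$ by $(S \setminus X) \cup Y$. This runs in $n^{O(k)} = n^{O(1/\varepsilon)}$ time, since each improving step can be found in $n^{O(k)}$ time and there are at most $n$ of them. I will show that the output $S$ satisfies $|O| \leqslant (1+\varepsilon)|S|$, where $O$ is an optimum independent set.

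The key structural fact is that, for any two independent sets $S, O$ in an even-hole-free graph $G$, the induced subgraph $F := G[(S \setminus O) \cup (O \setminus S)]$ is a forest. Writing $S' := S \setminus O$ and $O' := O \setminus S$, every edge of $F$ goes between $S'$ and $O'$ (as each of $S'$, $O'$ is independent), so $F$ is bipartite; any induced cycle of $F$ would have even length at least $4$ and would therefore be an even hole of $G$, which is impossible. Moreover, $k$-local optimality of $S$ enforces the following Hall-type condition on $F$: for every $Y \subseteq O'$ with $|Y| \leqslant k+1$, $|N_{S'}(Y)| \geqslant |Y|$. Indeed, $O$ is independent so $N_G(Y) \cap (S \cap O) = \emptyset$, giving $N_G(Y) \cap S \subseteq S'$, and the swap $(X, Y) := (N_{S'}(Y), Y)$ would be valid and strictly improving if $|N_{S'}(Y)| < |Y|$.

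To convert this local condition into a global ratio, I will decompose each tree component $T$ of $F$ by cutting $O(|O' \cap T|/k)$ edges so that every resulting piece contains at most $k+1$ vertices of $O'$. The decomposition is a standard greedy procedure: if $|O' \cap T| \leqslant k+1$, do nothing; otherwise, iteratively locate a subtree whose $O'$-count lies in $[\lceil(k+1)/2\rceil, k+1]$, cut its parent edge, and recurse on the remainder (the Hall condition above rules out the degenerate configurations, such as a star with $S'$-center and two or more pendant $O'$-leaves, that would otherwise force more than $O(|O' \cap T|/k)$ cuts). For each resulting piece $P$, let $Y_P := O' \cap P$; Hall gives $|Y_P| \leqslant |N_{S'}(Y_P)|$, and since any $S'$-neighbor of $Y_P$ outside $P$ must be the $S'$-endpoint of a cut edge whose $O'$-endpoint lies in $P$, we obtain $|Y_P| \leqslant |S' \cap P| + c_P$, where $c_P$ counts those cut edges. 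Summing over all pieces yields $|O'| \leqslant |S'| + O(|O'|/k)$, hence $|O'| \leqslant (1 + O(1/k))|S'|$, and adding $|S \cap O|$ to both sides gives $|O| \leqslant (1+\varepsilon)|S|$ for a suitable $k = \Theta(1/\varepsilon)$.

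The main step to get right is the linear-in-$1/k$ tradeoff of the tree decomposition; it is this improvement over the $O(1/\sqrt{k})$ bound one would derive from a generic planar separator that upgrades a Mustafa--Ray-style $n^{O(1/\varepsilon^2)}$ analysis (applicable to any planar exchange graph $F$) into the sharper $n^{O(1/\varepsilon)}$ bound claimed by the observation. Everything else is a routine adaptation of the local-search PTAS framework.
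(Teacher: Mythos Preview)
Your approach matches the paper's: run $k$-local search with $k=\Theta(1/\varepsilon)$ and exploit that the exchange graph $F=G[S'\cup O']$ is a forest (bipartite and even-hole-free). The paper does not prove the forest bound either; it just states that when $G[S'\cup O']$ is a forest the $C/\varepsilon$-\textsc{Local Search} suffices and calls this ``folklore,'' after noting that the Mustafa--Ray/Chan--Har-Peled planar analysis already gives $n^{O(1/\varepsilon^2)}$. So you are supplying strictly more detail than the paper. Your Hall-type condition is derived correctly.

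However, your tree-decomposition step is wrong as stated, not merely under-justified. You claim one can cut $O(|O'\cap T|/k)$ edges so that every component has at most $k+1$ vertices of $O'$, and that Hall rules out the obstructions. Take an $S'$-vertex $v$ with $m>k+1$ children $y_1,\dots,y_m\in O'$, each $y_i$ having a single $S'$-leaf child $s_i$. The Hall condition holds (for any $Y\subseteq\{y_1,\dots,y_m\}$ one has $|N_{S'}(Y)|=1+|Y|$), yet every proper rooted subtree has $O'$-count $0$ or $1$ while the whole tree has $O'$-count $m$; no subtree lands in $[\lceil(k+1)/2\rceil,k+1]$, and reducing the $O'$-count of $v$'s component to $k+1$ forces at least $m-k-1$ edge cuts, not $O(m/k)$. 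Your parenthetical only excludes the special case where the $y_i$ are pendant leaves. The edge-cut decomposition you rely on simply does not exist in general, even under Hall. A correct version of the folklore argument groups sibling subtrees into (possibly disconnected) batches of $O'$-weight $\Theta(k)$, or equivalently uses a small set of \emph{vertex} separators rather than edge cuts, and charges the overlap of the $N_{S'}(Y_P)$'s to those separator vertices; your summation $|O'|\le|S'|+O(|O'|/k)$ then goes through.
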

\begin{proof}
  The graph induced by the symmetric difference between any two feasible solutions is bipartite and even-hole-free, hence it is a forest.
  Let $S$ be a solution obtained by local search on an input graph $G$, $O$ be a fixed optimum solution, $S' := S \setminus O$, and $O' := O \setminus S$.
  It is known that when $G[S' \cup O']$ is planar, there is an absolute constant $C$ such that the $C/\varepsilon^2$-\textsc{Local Search} $1+\varepsilon$-approximates the problem \cite{Mustafa10,Chan12}, that is for a maximization problem, $\lvert S' \rvert \geqslant (1-\varepsilon)|O'|$, implying $\lvert S \rvert \geqslant (1-\varepsilon) \lvert O \rvert$.
  This gives a PTAS with running time $n^{O(1/\varepsilon^2)}$.
  When $G[S' \cup O']$ is even a forest, then it can be shown, and it is somewhat folklore, that a $C/\varepsilon$-\textsc{Local Search} is sufficient.
\end{proof}

\subsection{Strengthening the inapproximability}\label{sec:improving}

There are two directions to improve the hardness-of-approximation results of \cref{sec:triangle-free,sec:high-girth}.
As already mentioned, one can try to match upper and lower bounds in the approximation ratio, or at least to increase the exponent $\delta$ such that an $n^\delta$-approximation would contradict a standard complexity-theoretic assumption.
For triangle-free graphs, for instance, we do \emph{not} expect a matching $n^{1/6}$-approximation.
And a likely outcome is that, ignoring logarithmic factors, the $\sqrt n$-approximation is best possible.
We will actually show that an $n^{1/4 - \varepsilon}$-approximation is unlikely.
The other direction is to derandomize our reductions.
That way the inapproximability would be subject to the more (arguably the most) standard complexity assumption that P is not equal to NP.
Derandomizing without degrading the quality of the gap seems challenging.
We now encapsulate the reductions of \cref{sec:triangle-free,sec:high-girth} so that both improving tasks boil down to exhibiting a randomized or deterministic family of graphs.

We say that an infinite family of graphs $\mathcal C$ is \emph{non-disappearing} if there is a constant $K \in (0,1]$ such that for every positive integer $n$, there is a graph $G \in \mathcal C$ with at least $Kn$ and at most $n/K$ vertices.
  A non-disappearing family is called \emph{efficient} if there is a polynomial-time algorithm which given an integer $n$ (encoded in unary), outputs such a graph $G$.
  For example, a family containing at least one graph for every number of vertices is non-disappearing.
  We denote by $\mathcal G_\gamma$ the set of all graphs with girth at least $\gamma$.
  
\begin{theorem}\label{thm:black-box-deterministic}
  Let $\gamma > 3$ be an integer, $\delta \in (0,1)$ be a real (allowed to depend on $\gamma$), and $\mathcal C$ be an efficient non-disappearing family included in $\mathcal G_\gamma$, such that for every $G \in \mathcal C$ there is no disjoint pair of sets $A, B \subseteq V(G)$ satisfying both $\lvert A \rvert = \lvert B \rvert \geqslant \lvert V(G) \rvert ^{\delta}$ and $E(A,B) = \emptyset$.
  Then \mis in $\mathcal G_{\gamma}$ cannot be $n^{\frac{1-\delta}{2}-\varepsilon}$-approximated, unless P $=$ NP. 
\end{theorem}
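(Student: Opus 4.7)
The plan is to derandomize the reductions of \cref{thm:triangle-free,thm:high-girth} by replacing the random bipartite graph between any two ``blow-up'' parts $I(u),I(v)$ with the edges of a single deterministic graph drawn from $\mathcal C$. Start from the gap instance of Zuckerman~\cite{Zuckerman07}: for any $\varepsilon'>0$, it is NP-hard to decide, given an $N$-vertex graph $H$, whether $\alpha(H)\geqslant N^{1-\varepsilon'}$ or $\alpha(H)\leqslant N^{\varepsilon'}$. Fix a parameter $s$ (set below), and using the efficient non-disappearing property of $\mathcal C$, compute in polynomial time a graph $G\in\mathcal C$ with $|V(G)|=\Theta(Ns)$. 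Partition $V(G)$ arbitrarily into $N$ parts $I(v_1),\ldots,I(v_N)$, each of size $\Theta(s)$, one per vertex of $H$.

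Define the instance $G'$ of \mis on $V(G)$ by keeping from $E(G)$ only those edges $xy$ whose endpoints lie in $I(u)\cup I(v)$ for some $uv\in E(H)$. Since $G'\subseteq G$, we have $G'\in\mathcal G_\gamma$. For completeness, any independent set $S$ of $H$ yields the independent set $\bigcup_{u\in S}I(u)$ of $G'$ (there are no intra-$I(v)$ edges, nor edges between $I(u),I(v)$ with $uv\notin E(H)$), so in the YES case $\alpha(G')\geqslant\Omega(N^{1-\varepsilon'}s)$. For soundness, fix the threshold $T:=\lceil|V(G)|^\delta\rceil$, and given an independent set $S'$ of $G'$, set $S:=\{v\in V(H):|I(v)\cap S'|\geqslant T\}$. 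If some edge $uv\in E(H)$ satisfied $u,v\in S$, the disjoint sets $A:=S'\cap I(u)$ and $B:=S'\cap I(v)$ would each have size at least $|V(G)|^\delta$ and, by construction, satisfy $E_G(A,B)=E_{G'}(A,B)=\emptyset$, contradicting the defining property of $G\in\mathcal C$. Hence $S$ is independent in $H$, yielding $|S'|\leqslant|S|\cdot O(s)+N\cdot T=O(\alpha(H)\cdot s+N\cdot(Ns)^\delta)$.

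It remains to tune $s$ to maximize the gap. Balancing the two soundness terms $N^{\varepsilon'}s$ and $N^{1+\delta}s^\delta$ yields $s=\Theta(N^{(1+\delta-\varepsilon')/(1-\delta)})$, whence $n:=|V(G')|=\Theta(N^{(2-\varepsilon')/(1-\delta)})$ and the YES/NO gap becomes $\Omega(N^{1-2\varepsilon'})=\Omega(n^{(1-\delta)(1-2\varepsilon')/(2-\varepsilon')})$. Since this exponent tends to $(1-\delta)/2$ as $\varepsilon'\to 0$, for any fixed $\varepsilon>0$ we can choose $\varepsilon'$ small enough so that the gap strictly exceeds $n^{(1-\delta)/2-\varepsilon}$; an approximation algorithm achieving that ratio would distinguish the two sides of Zuckerman's gap and hence imply P $=$ NP. The main technical obstacle is a careful bookkeeping of the constant $K$ hidden in the non-disappearing property: $\mathcal C$ does not, a priori, contain a graph on exactly $Ns$ vertices, nor support a perfectly balanced partition. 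However, all relevant quantities are polynomial in $N$ and $s$, so these constant-factor slacks only perturb constants in the exponents and are absorbed into the additive $\varepsilon$.
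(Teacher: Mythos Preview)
Your proposal is correct and follows essentially the same route as the paper: overlay a single graph $G\in\mathcal C$ of size $\Theta(Ns)$ onto the blow-up of the Zuckerman gap instance, keep only edges of $G$ that cross between parts $I(u),I(v)$ with $uv\in E(H)$, and use the ``no empty bipartite patch of side $|V(G)|^\delta$'' property for soundness; your choice $s\approx N^{(1+\delta)/(1-\delta)}$ and resulting gap $n^{(1-\delta)/2-\varepsilon}$ match the paper's. One cosmetic point: your definition of $G'$ as keeping edges ``whose endpoints lie in $I(u)\cup I(v)$ for some $uv\in E(H)$'' literally also retains intra-$I(v)$ edges whenever $v$ is non-isolated in $H$; you clearly intend (and later use) that only edges with one endpoint in $I(u)$ and the other in $I(v)$ are kept, so just tighten the wording.
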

\begin{proof}
  We assume all the preconditions hold and follow the construction of \cref{thm:triangle-free,thm:high-girth}.
  We again draw a graph $F$ from graphs of size $N$ and gap $N^{1-\varepsilon}$.
  We substitute every vertex $v$ by an independent set $I(v)$ of size between $KN^{\frac{1+\delta}{1-\delta}}$ and $\frac{1}{K} N^{\frac{1+\delta}{1-\delta}}$ such that there is a $G \in \mathcal C$ of the same size as the obtained graph $G'$, and the sets $I(v)$ are balanced (their size differs by at most 1). 
  Both graphs have $\Theta(N^{\frac{1+\delta}{1-\delta}+1})=\Theta(N^{\frac{2}{1-\delta}})$ vertices, say $c N^{\frac{2}{1-\delta}}$.
  We arbitrary identify the vertices of $G$ and $G'$ in a one-to-one mapping.
  We keep an edge between two vertices $u$ and $v$ if $uv$ is both an edge in $G$ and $G'$.
  Thus we do the ``intersection'' of $G$ and $G'$.
  We call $J$ the final result.
  
  Since $G$ has girth at least $\gamma$, $J$ has also girth at least $\gamma$.
  By assumption on $\mathcal C$, if there is an edge between $u$ and $v$ in $F$, then for every $A \subset I(u)$ and $B \subset I(v)$ both of size $c^\delta N^{\frac{2 \delta}{1-\delta}}$, there is at least one edge in $E_J(A,B)$.
  We observe that $N^{\frac{2 \delta}{1-\delta}} = N^{\frac{1 + \delta}{1-\delta}-1}$ which is, up to constant multiplicative factors, the size of an $I(w)$ divided by $N$.
  Therefore we have the same important property as in \cref{thm:triangle-free,thm:high-girth}.
  Thus we can finish the proof similarly, and conclude that distinguishing between instances with independence number at most $N^{\frac{1+\delta}{1-\delta}+\varepsilon}$ or at least $N^{\frac{2}{1-\delta}-\varepsilon}$ is NP-hard, for an arbitrary small $\varepsilon > 0$.
  The gap is $N^{1-\varepsilon}$ and $n := |V(J)| = \Theta(N^{\frac{2}{1-\delta}})$, hence a gap of $n^{\frac{1-\delta}{2}-\varepsilon}$.
\end{proof}

We now give a randomized counterpart of the previous theorem.
For any integer $\gamma > 3$ and real $\delta \in (0,1)$, we say that a distribution of graphs $\mathcal D$ is \emph{$(\gamma,\delta)$-appropriate} if there is a constant $K \in (0,1]$ and a polynomial-time algorithm, that given an integer $n$ (encoded in unary), draws a graph $G$ of size at least $Kn$ and at most $n/K$ out of this distribution such that with high probability, $G$ has girth at least $\gamma$ and no disjoint pair of sets $A, B \subseteq V(G)$ satisfies both $\lvert A \rvert = \lvert B \rvert \geqslant \lvert V(G) \rvert ^{\delta}$ and $E(A,B) = \emptyset$.

\begin{theorem}\label{thm:black-box-randomized}
  Let $\gamma > 3$ be an integer, $\delta \in (0,1)$ be a real (allowed to depend on $\gamma$), and $\mathcal D$ be a $(\gamma,\delta)$-appropriate distribution.
  Then \mis in $\mathcal G_{\gamma}$ cannot be $n^{\frac{1-\delta}{2}-\varepsilon}$-approximated, unless NP $\subseteq$ BPP. 
\end{theorem}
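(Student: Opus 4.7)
The plan is to follow the reduction of Theorem~\ref{thm:black-box-deterministic} verbatim, but to replace the deterministic choice of $G \in \mathcal{C}$ by a single sample drawn from $\mathcal{D}$, and then to control the (small) failure probability of that sample by a union bound. First, I would start from an $N$-vertex graph $F$ taken from a family for which the H{\aa}stad--Zuckerman results \cite{Hastad96,Zuckerman07} give an approximation gap of $N^{1-\varepsilon}$ for \mis, i.e., where distinguishing $\alpha(F) \geqslant N^{1-\varepsilon}$ from $\alpha(F) \leqslant N^\varepsilon$ is NP-hard. I would then substitute each vertex $v$ of $F$ by an independent set $I(v)$ of balanced size $\Theta(N^{(1+\delta)/(1-\delta)})$ to obtain the substitution blowup $G'$, having $n = \Theta(N^{2/(1-\delta)})$ vertices. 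Using the polynomial-time sampler guaranteed by $(\gamma,\delta)$-appropriateness, I would draw $G \sim \mathcal{D}$ of the same order as $G'$, identify their vertex sets by an arbitrary bijection, and let $J$ be the edge-intersection of $G$ and $G'$.

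Next, I would verify that w.h.p.\ the graph $J$ inherits the two structural features that drove the proof of Theorem~\ref{thm:black-box-deterministic}. Since $J$ is a subgraph of $G$, its girth is at least $\gamma$ whenever $G$'s is, which happens w.h.p.\ by assumption on $\mathcal{D}$. For the edge-simulation property, observe that for every $uv \in E(F)$, the full bipartite graph $I(u) \times I(v)$ lies in $E(G')$; therefore any $G$-edge whose endpoints fall in such a pair $I(u), I(v)$ (with $uv \in E(F)$) survives in $J$. In particular, for any two disjoint sets $A \subseteq I(u)$, $B \subseteq I(v)$ both of size at least $|V(G)|^\delta$, the no-large-bipartite-hole property of $G$ yields an edge in $E_G(A,B)$, and this edge is automatically in $E_J(A,B)$. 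Both good events hold simultaneously with probability $1-o(1)$ by a union bound.

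The gap analysis is then identical to that of Theorem~\ref{thm:black-box-deterministic}: on YES-instances, lifting a stable set $S$ of $F$ with $|S| \geqslant N^{1-\varepsilon}$ to $\bigcup_{u \in S} I(u)$ gives an independent set of $J$ of size $n^{1-O(\varepsilon)}$; on NO-instances, if $S_J$ is independent in $J$, the set $\{v \in V(F) : |I(v) \cap S_J| \geqslant |V(G)|^\delta\}$ must be stable in $F$ (any edge in $F$ between two such vertices would, by the edge-simulation property, produce an edge of $J$ inside $S_J$), hence has size less than $N^\varepsilon$, which caps $|S_J|$ at $n^{(1+\delta)/2+O(\varepsilon)}$. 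Together, this yields a polynomial-time randomized reduction in $\mathcal{G}_\gamma$ with gap $n^{(1-\delta)/2-O(\varepsilon)}$, succeeding with probability $1-o(1)$. The main (and really the only) subtlety beyond Theorem~\ref{thm:black-box-deterministic} is the union bound over the two bad events of $(\gamma,\delta)$-appropriateness, which is immediate from the definition; combining with standard BPP amplification, a polynomial-time $n^{(1-\delta)/2-\varepsilon}$-approximation for \mis on $\mathcal{G}_\gamma$ would place NP $\subseteq$ BPP.
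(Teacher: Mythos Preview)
Your proposal is correct and follows exactly the approach of the paper: the paper's own proof simply says to repeat the construction of Theorem~\ref{thm:black-box-deterministic} with $G$ drawn from $\mathcal D$ instead of chosen from $\mathcal C$, and to invoke the assumption NP $\not\subseteq$ BPP to accommodate the randomness. Your write-up is in fact more detailed than the paper's (you spell out the girth inheritance, the edge-simulation, and the gap computation), but the underlying argument is identical.
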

\begin{proof}
  The proof is the same as \cref{thm:black-box-deterministic}, using a graph drawn from the distribution $\mathcal D$ instead of a deterministic one from $\mathcal C$.
  Therefore we need the stronger assumption that NP is not contained in BPP.
\end{proof}

There are many constructions, all randomized, of triangle-free graphs with smallest possible independence number $\tilde{O}(\sqrt n)$ \cite{Erdos61,Krivelevich95,Erdos95,Kim95,Bollobas11}.
These constructions all follow a simple scheme of starting from the empty graph, ordering the edges of the clique $K_n$, and then inserting an edge if it does not create a triangle, either among the inserted edges or among all the previous edges.
The real difficulty is in the analysis of this probabilistic experiment.
The logarithmic or constant factors were improved and the proofs simplified until Kim obtained a matching bound of $O(\sqrt{n \log n})$~\cite{Kim95}.
This can be seen as the lower bound of $\Omega(n^2/\log n)$ for the off-diagonal Ramsey number $R(3,n)$, matching the upper bound $O(n^2/\log n)$ of Ajtai et al.~\cite{Ajtai80}.

To apply \cref{thm:black-box-randomized}, we would need to check that the triangle-free graphs built in the aforementioned papers do not contain the complement of a large biclique $K_{n^\delta,n^\delta}$.
As, for our purposes, we do not need the optimal bound of Kim, we follow the original proof of Erd\H{o}s~\cite{Erdos61} giving the bound of $O(\sqrt{n} \log n)$.
Going through all the lemmas and replacing occurrences of $K_x$, where $x = O(\sqrt{n} \log n)$, by $K_{x,x}$, the desired result can be obtained.
In our language, the process described in the previous paragraph yields a $(4,1/2)$-appropriate distribution.
This together with \cref{thm:black-box-randomized} improves the inapproximability of \cref{thm:triangle-free}.

\begin{corollary}\label{cor:erdos}
  \mis in $\mathcal G_4$ (i.e., triangle-free graphs) cannot be $n^{\frac{1}{4}-\varepsilon}$-approximated, unless NP $\subseteq$ BPP. 
\end{corollary}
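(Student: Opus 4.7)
The plan is to apply \cref{thm:black-box-randomized} with $\gamma = 4$ and $\delta$ arbitrarily close to $1/2$: any polylogarithmic slack in the set-size threshold is absorbed into the arbitrary $\varepsilon$ of the conclusion. What must be supplied is a $(4,1/2)$-appropriate distribution, that is, a polynomial-time randomized procedure producing, for each $n$, a triangle-free graph $G$ on $\Theta(n)$ vertices such that, \whp, every pair of disjoint sets $A, B \subseteq V(G)$ with $|A| = |B| \geqslant n^{1/2}$ (up to a $(\log n)^{O(1)}$ factor) has at least one edge of $G$ between them.

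The natural candidate is the Erd\H{o}s random triangle-free process: draw a uniformly random ordering of the edges of $K_n$, and scan them in order, inserting each edge into $G$ if and only if it does not create a triangle with the edges already inserted. The output is triangle-free by construction, and the procedure runs in polynomial time. Erd\H{o}s's 1961 analysis~\cite{Erdos61} bounds, for a fixed set $S$ of size $s := C\sqrt{n}\log n$, the probability that $S$ remains independent in $G$: for this to happen, every edge inside $S$ offered to the process must have been rejected by a previously inserted edge forming a blocking triangle. A counting/deletion argument makes this probability tiny enough that a union bound over the $\binom{n}{s}$ choices of $S$ yields $\alpha(G) = O(\sqrt{n}\log n)$ \whp, equivalently the Ramsey lower bound $R(3,n) = \Omega(n^2 / \log n)$.

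My plan is to replay Erd\H{o}s's counting argument in a bipartite form. Fix two disjoint candidate sets $A, B$ with $|A| = |B| = s$, and upper bound the probability that every potential $A$-$B$ edge gets rejected by the process. For $ab$ with $a \in A, b \in B$ to be rejected, some previously inserted edge must form a triangle with $ab$, whose third vertex lies somewhere in $V(G)$. The same martingale/deletion estimates Erd\H{o}s uses then yield a per-pair bound of shape $\exp(-\Omega(s^2/\sqrt n))$, which comfortably beats the union-bound factor $\binom{n}{s}^2 \leqslant \exp(O(s \log n))$ as soon as $s$ exceeds $\sqrt n$ by a polylogarithmic factor. With high probability no bad pair $(A,B)$ exists, so the distribution is $(4, 1/2)$-appropriate up to an $n^{o(1)}$ loss, and \cref{thm:black-box-randomized} delivers the claimed $n^{1/4 - \varepsilon}$ inapproximability. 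The main obstacle is precisely this bipartite adaptation of the blocking-triangle count: in Erd\H{o}s's original proof the third vertex of a blocking triangle lies inside the single candidate set $S$, whereas here it may lie anywhere in $V(G)$, so one must verify that this extra degree of freedom is compensated by the standard concentration estimates on pairwise codegrees accumulated during the process. This is the technical content behind the informal prescription of ``replacing $K_x$ by $K_{x,x}$ in Erd\H{o}s's lemmas.''
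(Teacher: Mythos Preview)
Your proposal is correct and follows essentially the same route as the paper: apply \cref{thm:black-box-randomized} with $\gamma=4$ and $\delta=1/2$, supply the $(4,1/2)$-appropriate distribution via Erd\H{o}s's random triangle-free process~\cite{Erdos61}, and adapt his independence-number analysis to the bipartite setting by ``replacing $K_x$ by $K_{x,x}$'' in the lemmas. Your sketch is in fact more explicit than the paper's own justification, which simply asserts that this replacement goes through.
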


We are not aware of any explicit deterministic construction of triangle-free graphs whose complements do not contain $K_{n^{2/3},n^{2/3}}$ as a subgraph (which would derandomize \cref{thm:triangle-free}), let alone, $K_{\sqrt{n},\sqrt{n}}$.
Deterministic constructions of graphs with large girth and large chromatic number, such as Ramanujan graphs with non-constant degree, might give some lower bound via \cref{thm:black-box-deterministic}, but not as good as \cref{cor:erdos}.
Actually, being based on a tight construction, the inapproximability of \cref{cor:erdos} can only be improved via a totally different route.
One should also not completely rule out that there is an $n^{1/4}$-approximation for \smis on triangle-free graphs.

\section{Concluding remarks}\label{sec:perspectives}

The Erd\H{o}s-Hajnal conjecture has proven particularly difficult.
For example, the cases of $P_5$-free or $C_5$-free graphs are both wide open.
For the few graphs $H$ for which a proof that the Erd\H{o}s-Hajnal property holds, it appears that the proof comes with an efficient algorithm reporting a sufficiently large independent set or clique.
This is what we called the constructive Erd\H{o}s-Hajnal property.
We proposed a first and more humble step (see Theorem~\ref{thm:erdoshajnal-implies-approx}) in proving that a graph $H$ has the constructive Erd\H{o}s-Hajnal property: show that \mis in $H$-free graphs can be approximated within ratio $n^{1-\varepsilon}$ for an $\varepsilon > 0$, an unachievable ratio in general graphs.
As mentioned in the introduction, this is strictly simpler than Erd\H{o}s-Hajnal considering the case of $P_5$.
Yet it does not seem to us that this weaker conjecture is that much simpler now considering the graph $C_5$. 
We believe that efforts to settle the improved approximation conjecture might turn out useful to make progress on the Erd\H{o}s-Hajnal conjecture.
In general, a cross-fertilization between Approximability Theory and the study of favorable Ramsey properties may prove fruitful.
In particular, obtaining an $n^{0.99}$-approximation algorithm for \mis in $C_5$-free seems like a challenging open question.

Of course, classifying the approximability of \textsc{Maximum Independent Set} in $H$-free graphs is also an interesting task by its own means.
On the one hand, already known reductions rule out PTASes in most $H$-free graphs classes, namely for any connected $H$ different from a path or a subdivision of a claw.
On the other hand, a constant-approximation algorithm can be turned into a PTAS in many $H$-free classes, by running the approximation on the input graph elevated to some appropriate power (using for instance the lexicographic product).
This trick, originally used to rule out approximation algorithms for \textsc{Max Clique} in general graphs \cite{Feige91}, works in the setting of $H$-free classes when $H$ satisfies some properties, such as being a prime graph (i.e., having no non-trivial module).
Hence, although \mis admits a constant-factor approximation in $K_{1,t}$-free graphs for any $t \in \mathbb{N}$ (as mentioned in Section~\ref{sec:someH}), it is not in APX when the forbidden graph is a simple tree, such as the 1-subdivision of $K_{1,4}$.
Finally, another interesting consequence of the previous observation concerns $P_t$-free graphs: any constant-factor approximation for \mis in $P_t$-free graphs implies a PTAS (notice that the current ``best'' approximation algorithm in $P_t$-free graphs is a quasi-polynomial approximation scheme~\cite{Chudnovsky19}).




\medskip
\textbf{Acknowledgment.}
We would like to thank Colin Geniet for pointing out to us the remark on graph products mentioned in the conclusion.


\end{document}